\title{A Faster Algorithm for Computing Straight Skeletons}
\author{Siu-Wing Cheng \and Liam Mencel \and Antoine Vigneron}
\newcounter{theoremcounter}
\newtheorem{theorem}[theoremcounter]{Theorem}
\newtheorem{lemma}[theoremcounter]{Lemma}
\newcommand{\ske}{\mathcal{S}}
\newcommand{\poly}{\mathcal{P}}
\newcommand{\subdiv}{\mathcal{K}(\poly)}
\newcommand{\cell}{\mathcal{C}}
\newcommand{\terrain}{\mathcal{T}}
\newcommand{\motor}{\mathcal{G}}
\begin{document}

\maketitle

\begin{abstract}
We present a new algorithm for computing the straight skeleton of a polygon. For a polygon with $n$ vertices, among which $r$ are reflex vertices, we give a deterministic algorithm that reduces the straight skeleton computation to a motorcycle graph computation in $O(n (\log n)\log r)$ time. It improves on the previously best known algorithm for this reduction, which is randomized, and runs in expected $O(n \sqrt{h+1}\log^2 n)$ time for a polygon with $h$ holes.  Using  known motorcycle graph algorithms, our result yields improved time bounds for computing straight skeletons. In particular, we can compute the straight skeleton of a non-degenerate polygon in $O(n (\log n) \log r + r^{4/3+\varepsilon})$ time for any $\varepsilon>0$. On degenerate input, our time bound increases to $O(n (\log n) \log r + r^{17/11+\varepsilon})$.
\end{abstract}

\section{Introduction}
The straight skeleton of a polygon is defined as the trace of the vertices when the polygon shrinks, each edge moving at the same speed inwards in a perpendicular direction to its orientation.  (See \figurename~\ref{fig:shrink}.) It differs from the medial axis~\cite{medial} in that it is a straight line graph embedded in the original polygon, while the medial axis may have parabolic edges. The notion was introduced by Aichholzer et al.~\cite{aichholzer} in 1995, who gave the earliest algorithm for computing the straight skeleton. However, the concept has been recognized as early as 1877 by von Peschka~\cite{peschka}, in his interpretation as projection of roof surfaces.

\begin{figure}	
	\begin{subfigure}[b]{.3\textwidth}
	\centering \includegraphics[width=.8\textwidth]{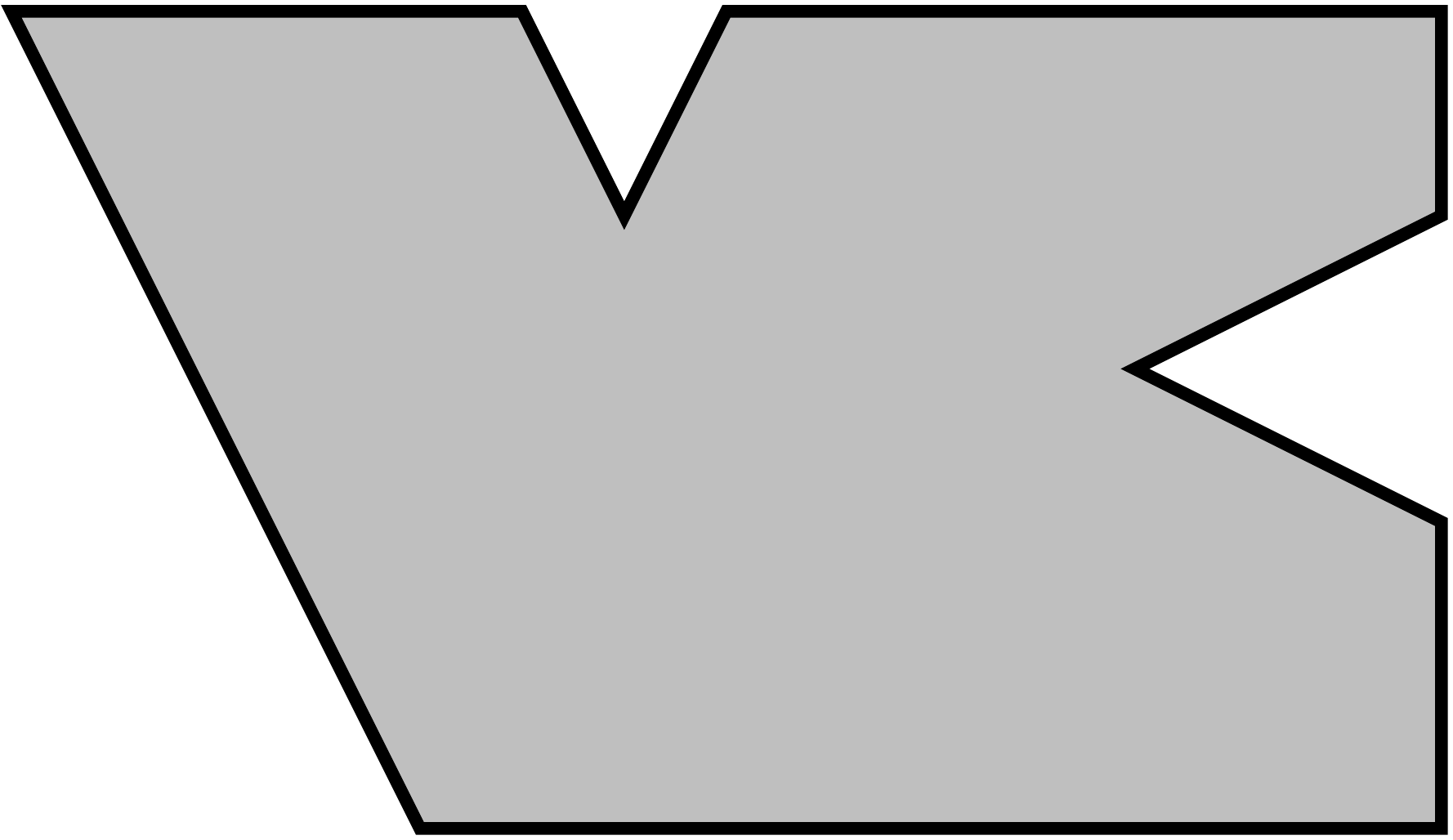}
	\caption{The input polygon $\poly$.\label{fig:shrink1}}
	\end{subfigure}
	\hspace{1ex}
	\begin{subfigure}[b]{.3\textwidth}
	\centering \includegraphics[width=.8\textwidth]{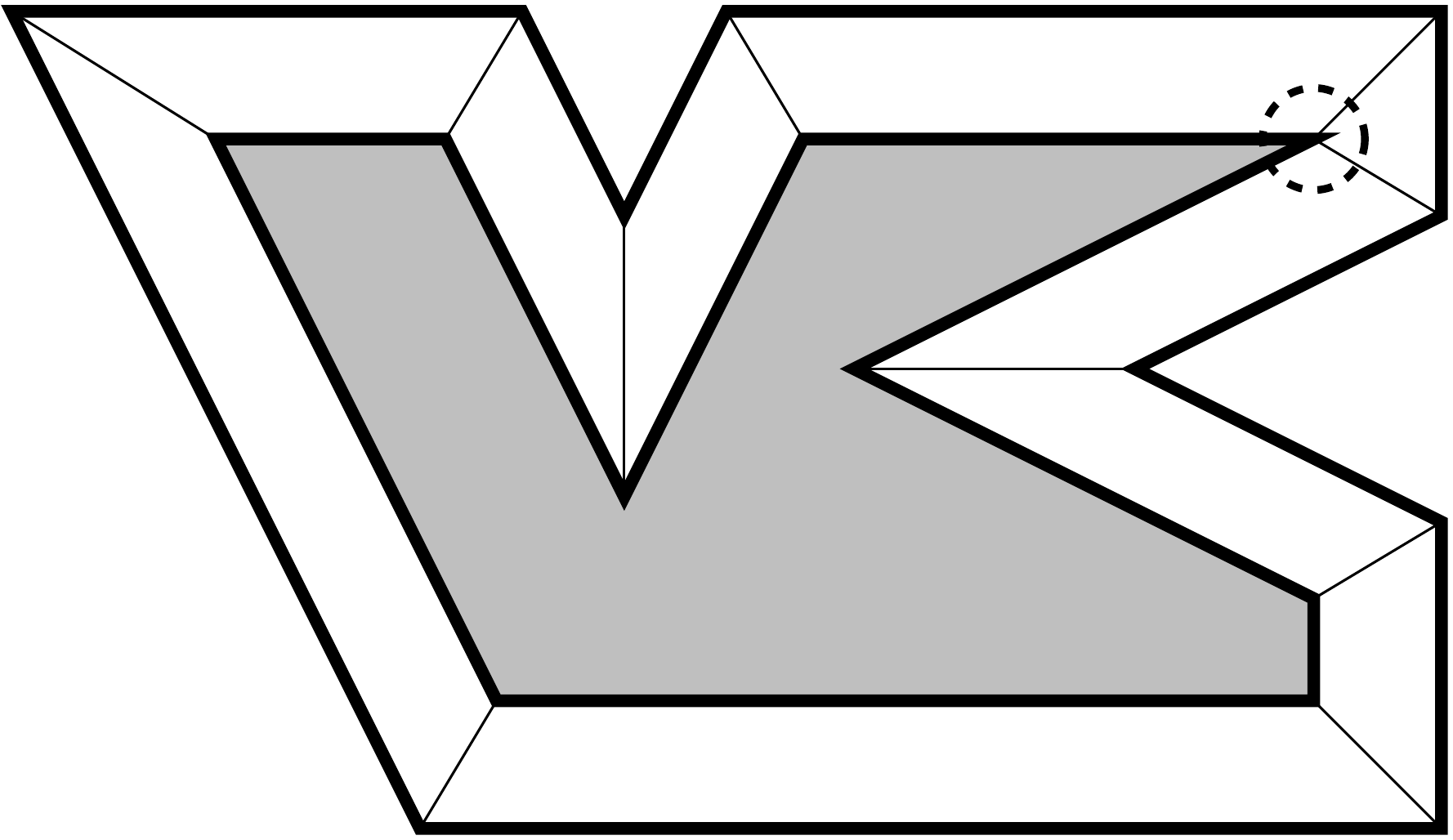}
	\caption{An offset of $\poly$.\label{fig:shrink2}}
	\end{subfigure}
	\hspace{1ex}
	\begin{subfigure}[b]{.3\textwidth}
	\centering \includegraphics[width=.8\textwidth]{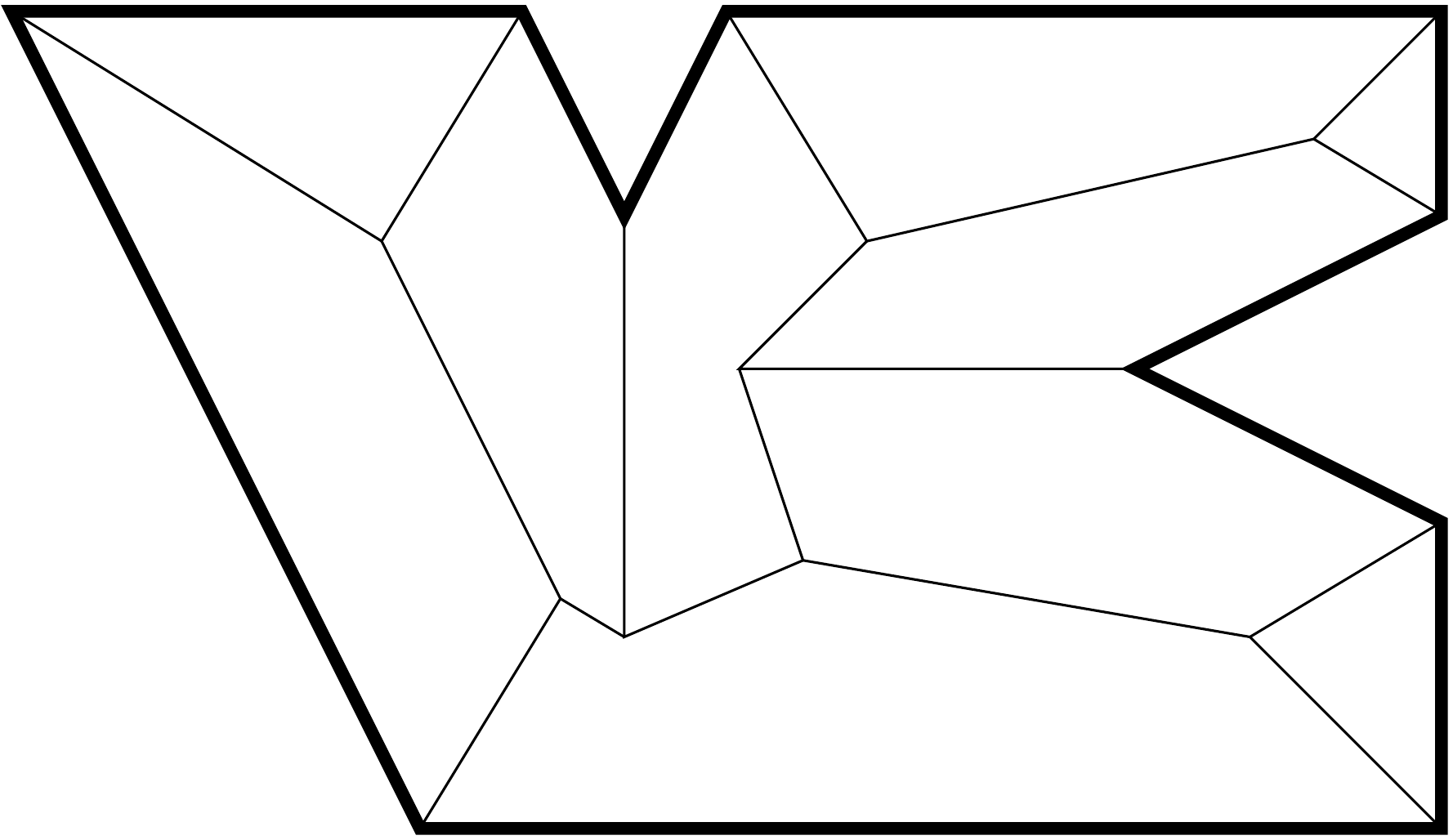}
	\caption{Straight skeleton $\ske$.\label{fig:shrink3}}
	\end{subfigure}
	\caption{The straight skeleton is obtained by shrinking the input polygon $\poly$.\label{fig:shrink}}
\end{figure}

The straight skeleton has numerous applications in computer graphics. It allows one to compute offset polygons~\cite{eppstein}, which is a standard operation in CAD. Other applications include architectural modelling~\cite{arch}, biomedical image processing~\cite{bio}, city model reconstruction~\cite{city}, computational origami~\cite{origami1,origami2,origami3} and polyhedral surface reconstruction~\cite{psr1,psr3,psr2}. Improving the efficiency of straight skeleton algorithms increases the speed of related tools in geometric computing.

The first algorithm by Aichholzer et al.~\cite{aichholzer} runs in $O(n^2 \log n)$ time, and simulates the shrinking process discretely. Eppstein and Erickson~\cite{eppstein} developed the first sub-quadratic algorithm, which runs in $O(n^{17/11 + \varepsilon})$ time. In their work, they proposed motorcycle graphs as a means of encapsulating the main difficulty in computing straight skeletons. Cheng and Vigneron~\cite{cheng} expanded on this notion by reducing the straight skeleton problem in non-degenerate cases to a motorcycle graph computation and a lower envelope computation. This reduction was later extended to degenerate cases by Held and Huber~\cite{huber2}. Cheng and Vigneron gave an algorithm for the lower envelope computation on a non-degenerate polygon with $h$ holes, which runs in $O(n \sqrt{h+1} \log^2 n)$ expected time. They also provided a method for solving the  motorcycle graph problem in $O(n \sqrt{n} \log n)$ time. Putting the two together gives an algorithm which solves the straight skeleton problem in $O(n \sqrt{h+1} \log^2 n + r \sqrt{r} \log r)$ expected time, where $r$ is the number of reflex vertices. 

\paragraph{Comparison with previous work.}
Recently, Vigneron and Yan \cite{yan} found a faster, $O(n^{4/3+\varepsilon})$-time algorithm for computing motorcycle graphs. It thus removed one bottleneck in straight skeleton computation. In this paper we remove the second bottleneck: We give a faster reduction to the motorcycle graph problem. Our algorithm performs this reduction in  deterministic $O(n (\log n)\log r)$ time, improving on the previously best known algorithm, which is randomized and runs in expected $O(n \sqrt{h+1}\log^2 n)$ time~\cite{cheng}. Recently, Bowers independently discovered an $O(n \log n)$-time, deterministic algorithm to perform this reduction in the case of simple polygons, using a very different approach~\cite{Bowers14}.

Using known algorithms for computing motorcycle graphs, our reduction yields faster algorithms for computing the straight skeleton. In particular, using the algorithm by Vigneron and Yan~\cite{yan}, we can compute the straight skeleton of a non-degenerate polygon in $O(n(\log n) \log r + r^{4/3+\varepsilon})$ time for any $\varepsilon>0$. On degenerate input, we use Eppstein and Erickson's algorithm, and our time bound increases to $O(n(\log n) \log r + r^{17/11+\varepsilon})$. For simple polygons whose coordinates are $O(\log n)$-bit rational numbers, we can compute the straight skeleton in $O(n \log^2 n)$ time using the motorcycle graph algorithm by Vigneron and Yan~\cite{yan} (even in degenerate cases). Table~\ref{tab:bounds} summarizes the previously known results and compares with our new algorithm. $O^*$ denotes the expected time bound of a randomized algorithm, and $O$ is for deterministic algorithms. To make the comparison easier, we replaced the number of holes $h$ with $r$, as $h=O(r)$.

\begin{table}
\centering
\begin{tabular}{|l|l|l|}
\hline
		&	Previously best known	& 	This paper \\
\hline
Arbitrary polygon	& $O(n^{8/11+\varepsilon}r^{9/11+\varepsilon})$ ~ \cite{eppstein}
			& $O(n(\log n) \log r + r^{17/11+\varepsilon})$\\
\hline
Non-degenerate polygon 	& $O^*(n\sqrt{r}\log^2 n)$ ~ \cite{cheng} &	 $O(n (\log n) \log r + r^{4/3+\varepsilon})$\\
\hline
Simple pol., arbitrary  &	$O^*(n\log^2 n+r^{17/11+\varepsilon})$ \ \cite{cheng,eppstein} &	
$O(n(\log n)\log r+r^{17/11+\varepsilon})$	\\
\hline
Simple pol., $O(\log n)$ bits &	$O^*(n\log^2 n)$ ~ \cite{cheng,yan}&	$O(n \log^2 n)$	\\
\hline
\end{tabular}
\caption{Summary of previously best known results, compared with those of our new algorithm.
\label{tab:bounds}}
\end{table}

\paragraph{Our approach.}
We use the known reduction to a lower envelope of slabs in 3D~\cite{cheng,huber2}: First a motorcycle graph induced by the input polygon is computed, and then this graph is used to define a set of slabs in 3D. The lower envelope of these slabs is a terrain, whose edges vertically project to the straight skeleton on the horizontal plane. (See Section~\ref{sec:prelim}.)

The difficulty is that these slabs may cross, and in general their lower envelope is a non-convex terrain, so known algorithms for computing lower envelopes of triangles would be too slow for our purpose. Our approach is thus to remove non-convex features: We compute a subdivision of the input polygon into convex cells such that, above each cell of this subdivision, the terrain is convex. Then the portion of the terrain above each cell can be computed efficiently, as it reduces to computing a lower envelope of planes in 3D. The subdivision is computed recursively, using a divide and conquer approach, in two stages. 

During the first stage (Section~\ref{sec:vertical}), we partition using vertical lines, that is, lines parallel to the $y$-axis. At each step, we pick the vertical line $\ell$ through the median motorcycle vertex in the current cell. We first cut the cell using $\ell$, and we compute the restriction of the terrain to the space above $\ell$, which forms a polyline. It can be computed in near-linear time, as it reduces to computing a lower envelope of line segments in the vertical plane through $\ell$. Then we cut the cell using the steepest descent paths from the vertices of this polyline. (See \figurename~\ref{fig:cut1}.) We recurse until the current cell does not contain any vertex of the motorcycle graph. (See \figurename~\ref{fig:example2_1}.)

The first step ensures that the cells of the subdivision are convex. However, valleys (non-convex edges) may still enter the interior of the cells. So our second stage (Section~\ref{sec:valley}) recursively partitions cells using lines that split the set of valleys of the current cell, instead of vertical lines. (See \figurename~\ref{fig:example2_2}.) As the first stage results in a partition where the restriction of the motorcycle graph to any cell is outerplanar, we can perform this subdivision efficiently by divide and conquer.

Each time we partition a cell, we know which slabs contribute to the child cells, as we know the terrain along the vertical plane through the cutting line. In addition, we will argue via careful analysis that our divide and conquer approach avoids slabs being used in too many iterations, and hence the algorithm completes in $O(n(\log n)\log r)$ time.

We state here the main result of this work:
\begin{theorem}\label{th:main}
Given a polygon $\poly$ with $n$ vertices, $r$ of which being reflex vertices, and given the motorcycle graph induced by $\poly$, we can compute the straight skeleton of $\poly$ in $O(n(\log n)\log r)$ time.
\end{theorem}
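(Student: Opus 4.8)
The plan is to build on the known reduction of the straight skeleton to a lower-envelope computation in three dimensions \cite{cheng,huber2}. From the given motorcycle graph $\motor$ I would first construct the family of slabs whose lower envelope is a terrain $\terrain$ whose edges project vertically to $\ske$. Since the polygon has $n$ edges there are $O(n)$ slabs, while $\motor$ has only $O(r)$ vertices; this asymmetry is what separates the two logarithmic factors in the target bound. The central obstacle is that $\terrain$ is in general non-convex, so a direct lower-envelope computation is too expensive. The entire difficulty is therefore to produce a subdivision $\subdiv$ of $\poly$ into convex cells such that, above each cell $\cell$, the terrain $\terrain$ is convex: over a convex cell the relevant portion of $\terrain$ is a lower envelope of \emph{planes}, which I can compute by a standard half-space-intersection routine in $O(m \log m)$ time for $m$ contributing slabs.

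First I would carry out the vertical subdivision of Section~\ref{sec:vertical} as a balanced divide and conquer. Within the current cell I take the vertical line $\ell$ through the \emph{median} motorcycle vertex, restrict $\terrain$ to the vertical plane above $\ell$, and compute the resulting polyline as a lower envelope of line segments in that plane in near-linear time. I would then cut the cell along $\ell$ and along the steepest-descent paths issuing from the breakpoints of this polyline; these descent cuts are precisely what forces the resulting subcells to be convex. Because the split is always at the median vertex, this recursion has depth $O(\log r)$, contributing one $\log r$ factor. I recurse until a cell is free of motorcycle vertices, and I would establish the structural fact, flagged in the overview, that after this stage the restriction of $\motor$ to every cell is outerplanar.

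Second, even though Stage~1 yields convex cells, valleys (the non-convex edges of $\terrain$) may still cross their interiors, so $\terrain$ need not yet be convex above a cell. In the valley subdivision of Section~\ref{sec:valley} I would recursively split each cell by a line that balances the set of valleys passing through it, again obtaining an $O(\log r)$-depth recursion; the outerplanarity from Stage~1 is what lets each such split be performed efficiently by divide and conquer. The correctness statement to nail down here is that once no valley crosses a cell $\cell$, $\terrain$ is convex above $\cell$, so the per-cell planar lower-envelope computation described above applies, and projecting its edges recovers the part of $\ske$ inside $\cell$.

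The hard part, I expect, is the running-time analysis rather than the correctness of the decomposition. The danger is that a single slab contributes to many cells across many recursion levels, inflating the total work well past $O(n(\log n)\log r)$. I would control this with a charging argument exploiting the fact that at each split we already know the terrain along the cutting plane, so the contributing slabs are handed to the child cells in a controlled manner; the goal is to show that any slab participates in only $O(\log r)$ levels, each costing $O(\log n)$ amortized per slab from the segment and envelope computations. Summing $O(n)$ total slab occurrences against a per-occurrence cost of $O(\log n)$ over $O(\log r)$ levels would then give the claimed $O(n(\log n)\log r)$ bound, completing the theorem.
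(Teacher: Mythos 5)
Your proposal follows essentially the same route as the paper: the two-stage subdivision (vertical cuts through the median motorcycle vertex with steepest-descent paths, then balanced cuts separating valleys within the now-outerplanar cells), convexity of the terrain above valley-free cells, per-cell lower envelopes of planes, and a charging argument for the time bound are exactly the paper's Sections~\ref{sec:vertical} and~\ref{sec:valley}. The one step you leave as a ``goal''---that each slab/edge of $\ske'$ conflicts with only $O(\log r)$ cells---is precisely the paper's Lemma~\ref{lem:edgeconflict}, proved there by showing that cells overlapping a ridge's interior must be wedges (Lemma~\ref{lem:wedge}) and charging each new wedge to a new cell containing a ridge endpoint.
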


Our algorithm does not handle weighted straight skeletons~\cite{eppstein} (where edges move at different speeds during the shrink process), because the reduction to a lower envelope of slabs does not hold in this case.

\section{Notations and preliminaries}
\label{sec:prelim}

The input polygon is denoted by $\poly$. A {\em reflex vertex} of a polygon is a vertex at which the internal angle is more than $\pi$. It has $n$ vertices, among which $r$ are reflex vertices. We work in $\mathbb{R}^3$ with $\poly$ lying flat in the $xy$-plane. The  $z$-axis becomes analogous to the time dimension. We  say that a line, or a line segment, is {\em vertical}, if it is parallel to the $y$-axis, and we say that a plane is vertical if it is orthogonal to the $xy$-plane. The boundary of a set $A$ is denote by $\partial A$. The cardinality of a set $A$ is denoted by $|A|$. We denote by $\overline{pq}$ the line segment with endpoints $p,q$.

\paragraph{Terrain.} At any time, the horizontal plane $z=t$ contains a snapshot of $\poly$ after shrinking for $t$ units of time. While the shrinking polygon moves vertically at unit speed, faces are formed as the trace of the edges, and these faces make an angle  $\pi / 4$ with the $xy$-plane. The surface formed by the traces of the edges is the {\em terrain} $\terrain$. (See \figurename~\ref{fig:slabs} a.) 
The traces of the vertices of $\poly$ form  the set of edges of $\mathcal{T}$. An edge $e$ of $\terrain$ is {\em convex} if there is a plane through $e$ that is above the two faces bounding $e$.  The edges of $\mathcal{T}$ corresponding to the traces of the reflex vertices will be referred to as {\em valleys}.  Valleys are the only non-convex edges on $\terrain$.  The other edges, which are convex, are called {\em ridges}. The {\em straight skeleton} $\ske$ is the graph obtained by projecting the edges and vertices of $\mathcal{T}$ orthogonally onto the $xy$-plane. We also call valleys and ridges the edges of $\ske$ that are obtained by projecting valleys and ridges of $\terrain$ onto the $xy$-plane.

\begin{figure}
\centering
\includegraphics[scale=.7]{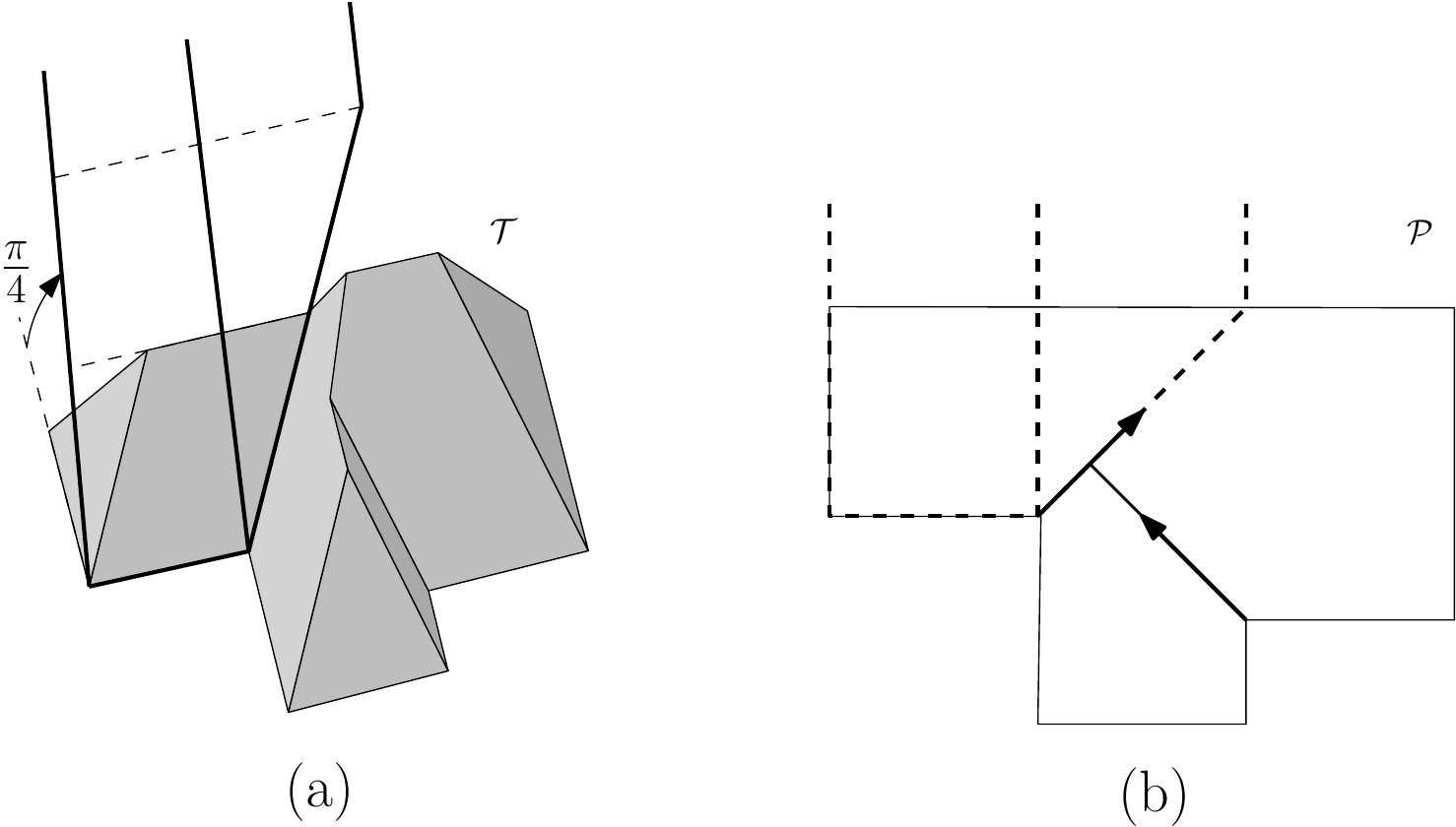}
\caption{Illustration of the two different types of slabs. (a) The terrain $\terrain$, an edge slab and motorcycle slab. This terrain has two valleys, adjacent to the two reflex vertices of the polygon. (b) The motorcycle graph associated with $\mathcal{P}$ and the boundaries of the edge slab and the motorcycle slab viewed from above.\label{fig:slabs}}
\end{figure}

\paragraph{Motorcycle graph.} Our algorithm for computing the straight skeleton assumes that a motorcycle graph induced by $\mathcal{P}$ is precomputed~\cite{cheng}. This graph is defined as follows.
A motorcycle is a point moving at a fixed velocity. We place a motorcycle at each reflex vertex of $\poly$. The velocity of a motorcycle is the same as the velocity of the corresponding reflex vertex when $\poly$ is shrunk, so its direction is the bisector of the interior angle, and its  speed is $1/\sin{(\theta / 2)}$, where $\theta$ is the exterior angle at the reflex vertex. (See \figurename~\ref{fig:motora}.)

The motorcycles begin moving simultaneously. They each leave behind a track as they move. When a motorcycle collides with either another motorcycle's track or the boundary of $\poly$, the colliding motorcycle halts permanently. (In degenerate cases, a motorcycle may also collide head-on with another motorcycle, but for now we rule out this case.) After all motorcycles stop, the tracks form a planar graph called the {\em motorcycle graph induced by $\poly$}. (see \figurename~\ref{fig:motorb}.)

\begin{figure}
	\centering
	\begin{subfigure}[b]{0.45\textwidth}
		\centering\includegraphics[scale=.7]{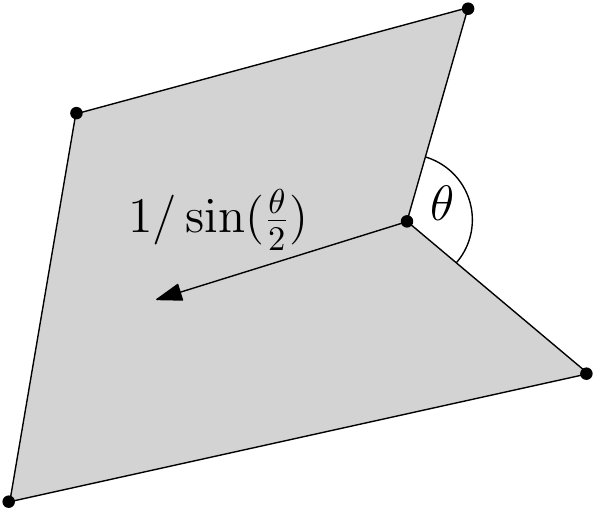}
		\caption{\label{fig:motora}}
	\end{subfigure}	
	\hspace{5ex}
	\begin{subfigure}[b]{0.45\textwidth}
		\centering\includegraphics[scale=.7]{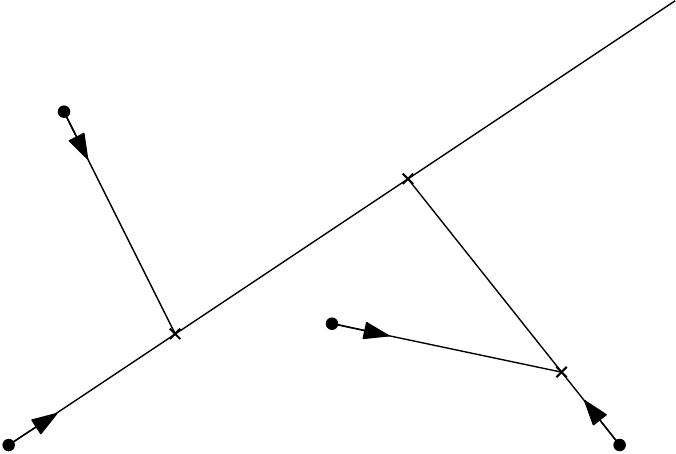}
		\caption{\label{fig:motorb}}
	\end{subfigure}
	\caption{Motorcycle graph.\label{fig:motor}}
\end{figure}

\paragraph{General position assumptions.} To simplify the description and the analysis of our algorithm, we assume that the polygon is in general position. No edge of $\mathcal{P}$ or $\ske$ is vertical. No two motorcycles collide with each other in the motorcycle graph, and thus each valley is adjacent to some reflex vertex. Each vertex in the straight skeleton graph has degree 1 or 3. Our results, however, generalize to degenerate polygons, as explained in Section~\ref{sec:degen}.

\paragraph{Lifting map.} The {\em lifted} version $\hat p$ of a point $p \in \poly$ is the point on $\terrain$ that is vertically above $p$. In other words, $\hat p$ is the point of $\terrain$ that projects orthogonally to $p$ on the $xy$-plane. We may also apply this transformation to a line segment $s$ in the $xy$-plane, then $\hat s$ is a polyline in $\terrain$. 
We will abuse notation and denote by $\hat \motor$ a lifted version of $\motor$ where the height of a point is the time at which the corresponding motorcycle reaches it. Then the lifted version $\hat e$ of an edge $e$ of $\motor$ does not lie entirely on $\terrain$, but it contains the corresponding valley, and the remaining part  of $\hat e$ lies above $\terrain$~\cite{cheng}. (See \figurename~\ref{fig:slabs}a.)

Given a point $\hat p$ that lies in the interior of a face $f$ of $\terrain$, there is a unique steepest descent path from $\hat p$ to the boundary of $\poly$. This path consists either of a straight line segment orthogonal to the base edge $e$ of $f$, or it consists of a segment going straight to a valley, and then follows this valley. (In degenerate cases, the path may follow several valleys consecutively.) If $\hat p$ is on a ridge, then two such descent paths from $p$ exist, and if $\hat p$ is a convex vertex, then there are three such paths. (See \figurename~\ref{fig:canon3}.)

\begin{figure}
	\begin{subfigure}[b]{0.3\textwidth}
	    \centering \includegraphics[width=0.8\textwidth]{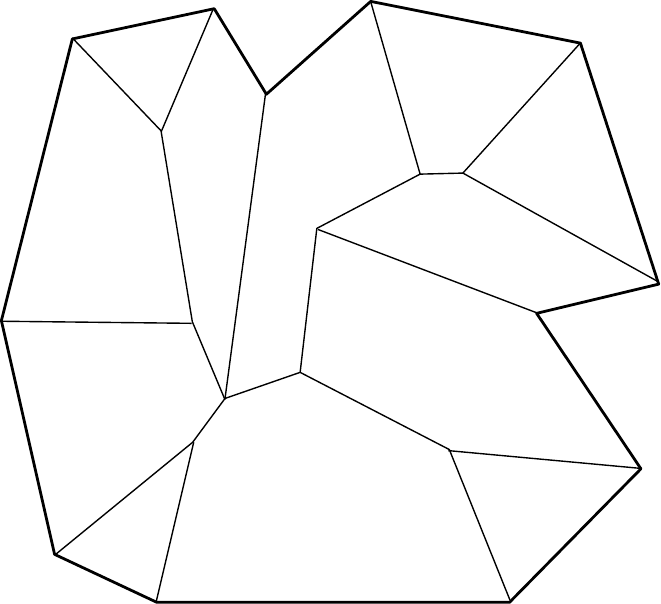}
	    \caption{The skeleton $\ske$.\label{fig:canon1}}
	\end{subfigure}
	\hspace{3ex}
	\begin{subfigure}[b]{0.3\textwidth}
	    \centering \includegraphics[width=0.8\textwidth]{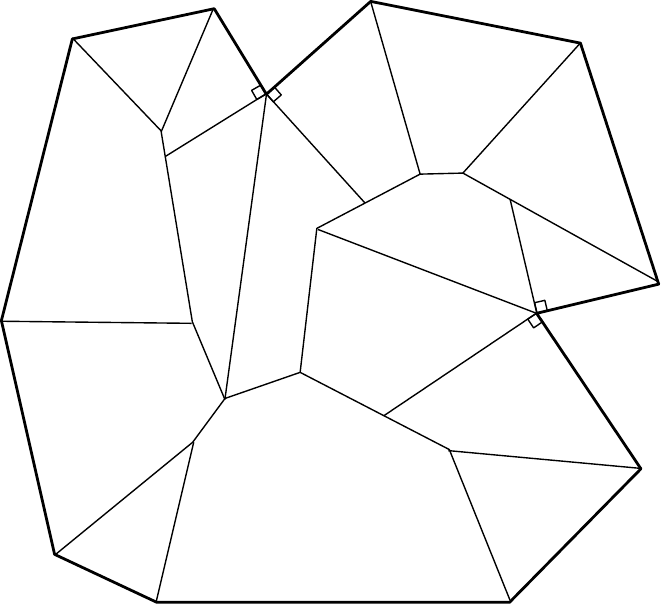}
	    \caption{The skeleton $\ske'$.\label{fig:canon2}}
	\end{subfigure}
	\hspace{3ex}
	\begin{subfigure}[b]{0.3\textwidth}
	    \centering \includegraphics[width=0.8\textwidth]{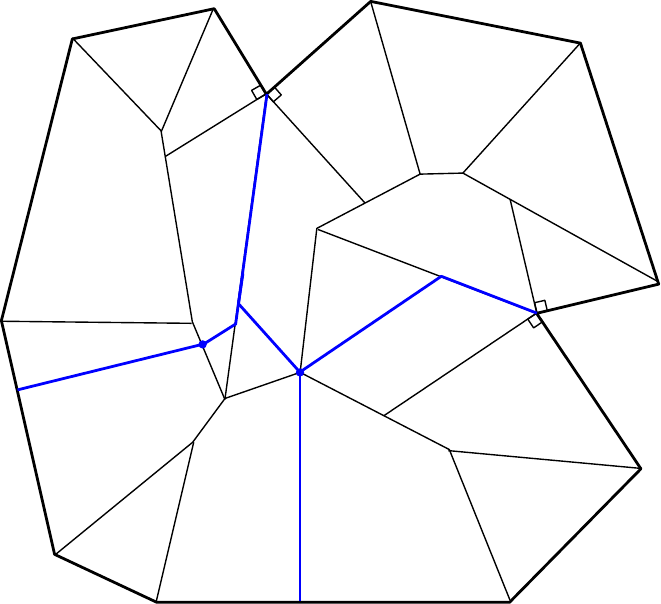}
	    \caption{Descent paths.\label{fig:canon3}}
	\end{subfigure}
	\caption{The polygon $\poly$, its skeletons and descent paths.\label{fig:canon}}
\end{figure}

\paragraph{Reduction to a lower envelope.}
Following Eppstein and Erickson~\cite{eppstein}, Cheng and Vigneron~\cite{cheng}, and Held and Huber~\cite{huber2}, we use a construction of the straight skeleton based on the lower envelope of a set of three-dimensional slabs. Each edge $e$ of $\mathcal{P}$ defines an {\em edge slab}, which is a 2-dimensional half-strip at an angle of $\pi / 4$ to the $xy$-plane, bounded below by $e$ and along the sides by rays perpendicular to $e$. (See \figurename~\ref{fig:slabs}.) We say that $e$ is the {\em source} of this edge slab.
 
For each reflex vertex $v = e \cap e'$, where $e$ and $e'$ are edges of $\poly$, we define two {\em motorcycle slabs} making angles of $\pi / 4$ to the $xy$-plane. One motorcycle slab is bounded below by the edge of $\hat{\motor}$ incident to $v$ and is bounded on the sides by two rays from each end of this edge in the ascent direction of $e$. The other motorcycle slab is defined similarly with $e$ replaced by $e'$. The {\em source} of a motorcycle slab is the corresponding edge of $\hat \motor$.
Cheng and Vigneron~\cite{cheng} proved the following result, which was extended to degenerate cases by Huber and Held~\cite{huber}:
\begin{theorem}
\label{thm:character}
The terrain $\mathcal{T}$ is the restriction of the lower envelope of the edge slabs and the motorcycle slabs to the space vertically above the polygon.
\end{theorem}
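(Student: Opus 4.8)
The plan is to recast Theorem~\ref{thm:character} as an equality of height functions and then prove the two resulting inequalities separately. For a point $p$ in the interior of $\poly$, let $t(p)$ be the unique height with $(p,t(p))=\hat p\in\terrain$; this is the time at which the shrinking boundary sweeps through $p$, so $\terrain$ is the graph of $t(\cdot)$ over $\poly$. For each slab $S$ let $h_S(p)$ denote its height above $p$ where defined, and let $L(p)=\min_S h_S(p)$ be the height of the lower envelope. It suffices to prove $t(p)=L(p)$ for every $p\in\poly$, which I would split into $L(p)\le t(p)$ (every point of $\terrain$ lies on some slab) and $L(p)\ge t(p)$ (no slab passes strictly below $\terrain$).

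The geometric backbone I would establish first is that everything relevant to a face localizes to a single plane. Each face $f$ of $\terrain$ is the trace of an edge $e$ of $\poly$ receding inward at unit speed, so $f$ lies in the supporting plane $H_e$ of the edge slab of $e$, a plane of slope $1$ in the direction perpendicular to $e$. For a valley $g$ bounding $f$ at a reflex vertex $v=e\cap e'$, a short slope computation shows that the lifted motorcycle edge $\hat g$ also lies in $H_e$: the motorcycle moves at speed $1/\sin(\theta/2)$ along the bisector, so the height it gains per unit of displacement perpendicular to $e$ is exactly $1$, matching the slope of $H_e$. Consequently the motorcycle slab attached to $g$ in the ascent direction of $e$ lies in $H_e$ as well, and the portion of $f$ cut off by the wedge behind $v$ is covered, inside $H_e$, by this motorcycle slab. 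This reduces every comparison over $f$ to a comparison within the single plane $H_e$.

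With this in hand, $L(p)\le t(p)$ follows from the steepest-descent description of $\terrain$ recalled in the preliminaries. From $\hat p$ the descent path either runs orthogonally to the base edge $e$, placing $\hat p$ in the perpendicular half-strip of the edge slab of $e$ at height equal to its distance to $e$, namely $t(p)$; or it first reaches a valley, in which case the localization above places $\hat p$ on the corresponding motorcycle slab, again at height $t(p)$. Either way some slab attains $t(p)$, so $L(p)\le t(p)$.

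For the reverse inequality I must show that no slab dips strictly below $\terrain$ over $\poly$, that is, $h_S(p)\ge t(p)$ for every slab $S$ and every $p$ in its footprint. For an edge slab of $e$ this would follow from the monotonicity of straight-skeleton offsets: since every edge only recedes as $\poly$ shrinks, the offset region at any time $\tau$ stays on the inner side of the offset line of $e$, so a point at perpendicular distance $d$ from $e$ is already swept by time $d=h_S(p)$, giving $h_S(p)\ge t(p)$. For a motorcycle slab the analogous claim is that the offset region stays behind the wavefront edge created through the reflex vertex, which the slab in $H_e$ represents; here I would invoke that $\hat g$ coincides with the valley on $\terrain$ and lies strictly above $\terrain$ beyond it, so the slab, rising from $\hat g$ in the ascent direction of $e$, never drops below $\terrain$. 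This motorcycle-slab bound is the hard part, because it is exactly where the correctness of the motorcycle graph is used: one must rule out a motorcycle that halts too early, which would let its slab dip below $\terrain$ in the reflex wedge, and confirm that $\hat g$ separates from $\terrain$ precisely where the true reflex-vertex trajectory is cut off. Making this precise, and checking that degenerate descent paths following several valleys in succession are handled consistently, is where the main care lies; once the localization to $H_e$ and the slope-matching identity are in place, the remaining verifications are routine, and combining the two inequalities yields $t(p)=L(p)$ for all $p\in\poly$.
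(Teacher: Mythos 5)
You should know at the outset that the paper does not prove Theorem~\ref{thm:character} at all: it imports the statement from Cheng and Vigneron~\cite{cheng}, with the degenerate case credited to Huber and Held~\cite{huber}, so there is no internal proof to compare yours against. Judged on its own terms, your scaffolding --- writing $\terrain$ as the graph of a height function $t$, splitting the claim into $L\le t$ and $L\ge t$, and localizing each face together with its incident motorcycle slabs into the supporting plane $H_e$ via the slope-matching computation --- is the right shape, and your argument for $L\le t$ (every terrain point lies on some slab, by following its steepest-descent path either to the base edge or to a valley, which is contained in the lifted motorcycle edge) is essentially complete.

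The direction $L\ge t$, however, has two genuine gaps. First, your justification of the edge-slab bound rests on a claim that is false for non-convex polygons: the offset region does \emph{not} stay on the inner side of the offset line of $e$. In an L-shaped polygon, with $e$ the top edge of the horizontal bar, points high up in the vertical bar remain unswept long after the offset line of $e$ has passed them. What you actually need is the claim restricted to the perpendicular strip of $e$, and this does not follow from ``edges only recede'' (the wavefront edge of $e$ may vanish early, and other wavefront pieces then sweep the strip); it does follow from the observation that $t$ vanishes on $\partial\poly$ and is $1$-Lipschitz, since every face of $\terrain$ has slope exactly $1$, whence $t(p)\le d(p,\partial\poly)\le d(p,e)=h_S(p)$ for $p$ in the strip. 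Second, and more seriously, the motorcycle-slab bound is deferred rather than proven: you invoke the property that $\hat g$ contains the valley and lies above $\terrain$ beyond it, and you concede that making this precise ``is where the main care lies.'' That property \emph{is} the mathematical core of the theorem --- it is exactly what encodes correctness of the motorcycle graph, and establishing it requires an induction over the interleaved wavefront and motorcycle events, ruling out a motorcycle that crashes before its reflex vertex dies and showing the track stays above the terrain afterwards. As written, your proposal assumes this crux; if one grants it as a citation to~\cite{cheng} (as the present paper does when describing the lifting map), the remaining steps can indeed be completed along the lines you sketch, but then the argument is a reduction to the very lemma that the theorem is meant to package, not an independent proof.
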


Our algorithm constructs a graph $\ske'$, which is obtained from $\ske$ by adding two edges at each reflex vertex $v$ of $\poly$ going inwards and orthogonally to each edge of $\poly$ incident to $v$. (See \figurename~\ref{fig:canon2}.) These extra edges are called {\em flat edges}.
We also include the edges of $\poly$ into $\ske'$. It means that each face $f$ of $\ske'$ corresponds to exactly one slab. More precisely, a face is the vertical projection of $\terrain \cap \sigma$ to the $xy$-plane for some slab $\sigma$. By contrast, in the original straight skeleton $\ske$, a face incident to a reflex vertex corresponds to one edge slab and one motorcycle slab.

\section{Computing the vertical subdivision}
\label{sec:vertical}

In this section, we describe and we analyze the first stage of our algorithm, where the input polygon $\poly$ is recursively partitioned using vertical cuts. The corresponding procedure is called {\scshape Divide-Vertical}, and its pseudocode can be found in Algorithm~\ref{alg:vertical}. It results in a subdivision of the input polygon $\poly$, such that any cell of this subdivision has the following property: It does not contain any vertex of $\motor$ in its interior, or it is contained in the union of two faces of $\ske'$. The second stage of our algorithm is presented in Section~\ref{sec:valley}.

\subsection{Subdivision induced by a vertical cut}\label{sec:verticalcut}

At any step of the algorithm, we maintain a planar subdivision $\subdiv$, which is a partition of the input polygon $\poly$ into polygonal cells. Each cell is a polygon, hence it is connected. A cell $\cell$ in the current subdivision $\subdiv$ may be further subdivided as follows.

Let $\ell$ be a vertical line through a vertex of $\motor$. We assume that $\ell$ intersects $\cell$, and hence $\cell \cap \ell$ consists of several line segments $s_1,\dots,s_q$. These line segments are introduced as new boundary edges in $\subdiv$; they are called the {\em vertical edges} of $\subdiv$. They may be further subdivided during the course of the algorithm, and we still call the resulting edges vertical edges. 

We then insert non-vertical edges along steepest descent paths, as follows. Note that we are able to efficiently compute the intersection $\ske' \cap \ell$ without knowing $\ske'$, this is explained in the detailed description of the algorithm. Each intersection point $p \in s_j \cap \ske'$ has a lifted version $\hat p$ on $\terrain$. By our non-degeneracy assumptions, there are at most three steepest descent paths to $\partial \cell$ from $\hat p$. The vertical projections of these paths onto $\cell$ are also inserted as new edges in $\subdiv$. The resulting partition of $\cell$ is the {\em subdivision induced by} $\ell$. (See \figurename~\ref{fig:example}.)

We denote by $\cell_1,\cell_2,\dots$ the cells of $\subdiv$ that are constructed during the course of the algorithm. Let $\ell_i^-$ and $\ell_i^+$ denote the vertical lines through the leftmost and rightmost point of $\cell_i$, respectively. When we perform one step of the subdivision, each new cell lies entirely to the left or to the right of the splitting line, and thus by induction, any vertical edge of a cell $\cell_i$ either lies in $\ell_i^-$ or $\ell_i^+$. We now study the geometry of these cells.
\begin{lemma}\label{lem:convexity} Let $p$ be a reflex vertex of a cell $\cell_i$. Then $p$ is a reflex vertex of $\poly$ such that $\partial \cell_i$ and $\partial \poly$ coincide in a neighborhood of $p$, or $p$ is a point where a descent path bounding $\cell_i$ reaches a valley.
\end{lemma}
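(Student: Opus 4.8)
The plan is to prove the equivalent statement that every vertex $p$ of $\cell_i$ which is neither of the two listed types is \emph{convex}, i.e.\ has interior angle at most $\pi$. Since $\cell_i$ is a simple polygon, $p$ has exactly two incident boundary edges, and by construction (arguing by induction on the cutting steps) every boundary edge of any cell is of one of three kinds: a portion of $\partial\poly$, a vertical edge, or the projection of a steepest descent path segment. A descent path segment is in turn either a straight segment orthogonal to some edge of $\poly$, or a segment running along a valley. I would therefore carry out a case analysis on the pair of edges incident to $p$.

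First I would dispose of every vertex incident to a vertical edge. Such an edge lies on $\ell_i^-$ or $\ell_i^+$; assume it lies on $\ell_i^-$, the line $x=x_0$ where $x_0$ is the minimum $x$-coordinate attained on $\cell_i$. Then $\cell_i\subseteq\{x\ge x_0\}$, so in a neighbourhood of $p$ the cell is contained in the half-plane bounded by $\ell_i^-$ through $p$; hence the interior angle at $p$ is at most $\pi$ and $p$ is convex. This half-plane argument (symmetric for $\ell_i^+$) removes all vertical edges from consideration, so it remains to treat vertices where both incident edges are polygon edges or descent path segments.

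Next I would handle the remaining junctions. If both incident edges are polygon edges, then $p$ is a vertex of $\poly$ at which $\partial\cell_i$ and $\partial\poly$ coincide, which is exactly the first listed type. When a descent path segment meets another straight edge (a polygon edge, or another descent segment) transversally at an interior point of that edge, the two wedges at $p$ have measures summing to $\pi$, so both are convex; this covers the $T$-junctions created when a descent path is truncated at $\partial\cell$. The endpoints of a descent path on $\partial\poly$ are convex as well: a straight descent segment meets its base edge orthogonally (angle $\pi/2$), while a valley descent segment ends at the reflex vertex $v$ it emanates from, where, because the valley follows the bisector of the interior angle $\theta_v<2\pi$ of $\poly$ at $v$, the cell angle is $\theta_v/2<\pi$. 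The only remaining junction is the bend where the straight part of a descent path first meets a valley and turns to follow it, and this is precisely the second listed type. Thus every vertex that is not of the two listed types is convex, which proves the lemma.

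I expect the main obstacle to be the careful treatment of the descent path endpoints and junctions: one must verify that truncating a descent path against an existing edge only ever produces the two supplementary (hence convex) angles of a $T$-junction, and that the only place a descent path can turn, and therefore create a genuine non-convexity, is where it reaches a valley. The bisector property of the motorcycle direction is what guarantees that the endpoint at a reflex vertex stays convex, so I would make sure to invoke it explicitly. By contrast, the vertical edges, which might at first look like the natural source of reflex vertices, are handled cleanly and uniformly by the strip containment $\cell_i\subseteq\{x_0\le x\le x_1\}$.
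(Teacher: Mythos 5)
Your proof is correct and follows essentially the same route as the paper's: an induction over the cutting steps showing that vertices on the vertical cut lines cannot be reflex (the cell lies in a half-plane bounded by the cut through such a vertex), that lower endpoints of descent paths are convex by orthogonality to the base edge or by the bisector property at a reflex vertex of $\poly$, and that the only remaining candidates for reflex vertices are the bends where descent paths reach valleys. Your write-up simply spells out the T-junction and endpoint cases that the paper compresses into a sentence or two.
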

\begin{proof}
We prove it by induction. The initial cell is $\cell_1=\poly$, and hence the property holds. When we perform a subdivision of a cell $\cell_i$ along a line $\ell$, we cannot introduce reflex vertices along $\ell$, as we insert the segments $\cell_i \cap \ell$ as new cell boundaries. So new reflex vertices may only appear along descent paths. They cannot appear at the lower endpoint of a descent path, as a descent path can only meet a reflex vertex along its exterior angle bisector.  So a reflex vertex may only appear in the interior of a descent path, and a descent path only bends when it reaches a valley.
\end{proof}

The lemma above shows that non-convexity may only be introduced when a bounding path reaches a valley. The lemma below implies that, at any point in time, it can occur only once per valley. (See \figurename~\ref{fig:example}.)
\begin{lemma}\label{lem:valley}
Let $e=\overline{pq}$ be a valley or a flat edge of $\ske'$, with $p$ being a reflex vertex of $\poly$ and $q$ being the other endpoint of $e$. At any time during the course of the algorithm, there is a point $a$ along $e$ such that $\overline{pa}$ is contained in the union of the boundaries of the cells of $\subdiv$, and the interior of $\overline{aq}$ is contained in the interior of a cell $\cell_i$.
\end{lemma}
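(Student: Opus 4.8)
The plan is to prove the statement by induction on the sequence of subdivision steps performed by \textsc{Divide-Vertical}, taking the statement itself as the invariant maintained throughout the algorithm. For the base case the only cell is $\cell_1 = \poly$, and we may take $a = p$: then $\overline{pa} = \{p\}$ lies on $\partial\poly$, which is part of the cell boundaries, while the interior of $\overline{aq} = \overline{pq}$ lies in the interior of $\poly$, since $e$ emanates from the reflex vertex $p$ into the polygon. Thus the valley or flat edge starts out entirely exposed except at its base point $p$.

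For the inductive step, suppose the invariant holds with split point $a$ before we subdivide a cell $\cell_i$ along a vertical line $\ell$. If the interior of $\overline{aq}$ does not lie in $\cell_i$, then none of the new cuts meet it and the invariant is preserved with the same $a$. So assume the interior of $\overline{aq}$ lies in the interior of $\cell_i$; this is the interesting case. I would parametrize $e$ from $p$ (the low end, at height $0$) up to $q$, so that the already covered portion $\overline{pa}$ is the lowest part of $e$ and the exposed portion $\overline{aq}$ lies above it, with its lower endpoint $a$ on $\partial\cell_i$.

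The crucial geometric observation I would establish is that the new cuts---the line $\ell$ together with the steepest descent paths emanating from the points of $\ske' \cap \ell$---never cross $e$ transversally, but can only \emph{merge} with $e$ and then run along it downhill toward $p$. For a valley this holds because a steepest descent path reaching a concave crease must follow it downward rather than cross it; for a flat edge it holds because $e$ is orthogonal to its incident polygon edge, so the straight descent paths in the adjacent region are parallel to $e$ and cannot cross it. Consequently, if some new cut touches the exposed segment at an interior point, the descent path through the \emph{highest} such point follows $e$ downhill without leaving the interior of $\cell_i$ until it reaches $a$ on $\partial\cell_i$, and hence it covers the entire sub-segment from $a$ up to that point. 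In particular $\ell$ meets $\overline{aq}$ in at most one point $b$ (as $e$ is a straight segment not contained in $\ell$), and since $b \in \ske' \cap \ell$, the descent path from $\hat b$ runs down $e$ to $a$ and covers $\overline{ab}$.

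Combining these facts, the portion of $e$ that becomes part of the new cell boundaries is exactly $\overline{pa'}$, where $a'$ is the highest point of $e$ touched by any new cut; this portion is connected and contains $\overline{pa}$, because every newly covered sub-segment reaches all the way down to $a$. The remaining exposed part, the interior of $\overline{a'q}$, is disjoint from $\ell$ and from every descent path (nothing touches $e$ above $a'$), and it was already disjoint from the old boundary of $\cell_i$ except at $a$; therefore it lies in the interior of a single child cell, and the invariant is re-established with split point $a'$. I expect the main obstacle to be making the non-transversal-crossing claim fully rigorous and uniform for both valleys and flat edges, together with the accompanying argument that the newly covered sub-segments are contiguous with $\overline{pa}$, so that the covered portion remains a single initial segment of $e$ rather than fragmenting.
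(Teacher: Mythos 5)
Your proof is correct and takes essentially the same route as the paper's: induction over the subdivision steps maintaining the stated invariant, with the key observation that new cuts can only merge with $e$ and run along it down to $a$, so the covered portion remains an initial segment $\overline{pa'}$ and the exposed remainder stays inside a single cell. The paper's version is terser---it simply notes that at most two descent paths and one vertical cut can meet $\overline{aq}$, and that the descent paths from these intersection points down to $a$ become cell boundaries---whereas you spell out the non-transversal-crossing justification (concave crease for valleys, parallelism for flat edges) that the paper leaves implicit.
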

\begin{proof} 
We proceed by induction, so we assume that at the current point of the execution of the algorithm,  there is a point $a$ on $e$ such that $\overline{pa}$ is contained in the union of the edges of $\subdiv$, and $\overline{aq}$ is contained in the interior of a cell $\cell_j$. So $e$ can only intersect  the interior of a new cell if this cell is obtained by subdividing $\cell_j$. When performing this subdivision, at most two descent paths and one vertical cut can intersect $\overline{aq}$, and then the descent paths from these intersection points to $a$ are added as cell boundaries. After that, we are again in the situation where $e$ is split into two segments $\overline{pb}$ and $\overline{bq}$, with $\overline{pb}$ being covered by edges of $\subdiv$ and $\overline{bq}$ being in the interior of a cell.
\end{proof}

A ridge, on the other hand, can cross the interior of several cells. But its intersection with any given cell is a single line segment:
\begin{lemma}\label{lem:ridge}
For any ridge $e$ and any cell $\cell_i$, the intersection $e \cap \cell_i$ is a single line segment, and $e \cap \partial \cell_i$ consists of at most two points.
\end{lemma}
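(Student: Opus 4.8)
The plan is to control how the straight segment $e$ can enter and leave $\cell_i$ by classifying the pieces of $\partial\cell_i$ into three types --- arcs of $\partial\poly$, vertical edges (lying on $\ell_i^-$ or $\ell_i^+$), and projected steepest-descent paths --- and showing that the relative interior of $e$ can cross only the vertical edges. Since the vertical edges sit on just two lines and $e$ is non-vertical, this will force $e\cap\cell_i$ to be a single sub-segment.

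The key geometric fact I would establish first is that no descent path bounding $\cell_i$ crosses the relative interior of the ridge $e$. A ridge is by definition a \emph{convex} edge of $\terrain$: there is a plane through it lying above the two incident faces, so the ridge line is a transverse local maximum of the height (time) function, and both incident faces slope downward away from it. A steepest-descent path, on the other hand, is strictly decreasing in height. To pass from one of the two faces incident to $e$ to the other, such a path would have to ascend to reach the ridge line, which is impossible. The valley-following portion of a descent path needs a separate remark: it coincides with a valley edge of $\ske'$, and two distinct edges of the planar graph $\ske'$ meet only at vertices, hence never in the relative interior of the ridge $e$.

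Next I would use that the relative interior of a ridge lies in the interior of $\poly$ (its only possible boundary contact is an endpoint at a convex vertex of $\poly$), so the interior of $e$ cannot meet the arcs of $\partial\poly$ bounding $\cell_i$. Combined with the previous step, every transversal crossing of $\partial\cell_i$ by the interior of $e$ must occur on a vertical edge, and all vertical edges of $\cell_i$ lie on $\ell_i^-$ or $\ell_i^+$. Now invoke general position: $e$ is non-vertical, hence $x$-monotone and meeting each of $\ell_i^-,\ell_i^+$ in at most one point, while $\cell_i$ is contained in the closed slab between these two lines. If $e$ left $\cell_i$ and re-entered, ordering along $e$ by $x$ would give an exit crossing and a later re-entry crossing, both forced onto $\{\ell_i^-,\ell_i^+\}$; but once $e$ exits through the right line it has $x>\ell_i^+$ and can never return to the slab, while any crossing strictly inside the slab would have to be on a descent path or on $\partial\poly$, both excluded. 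Thus $e\cap\cell_i$ is a single closed sub-segment $[a,b]$, its relative interior lies in $\operatorname{int}\cell_i$ (there is no overlap with a boundary edge, as $e$ is a ridge and the boundary edges are descent paths, vertical edges, or polygon edges), and therefore $e\cap\partial\cell_i\subseteq\{a,b\}$ has at most two points.

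The main obstacle is precisely the first step: arguing cleanly that a steepest-descent path cannot cross a convex ridge because it would have to climb. Everything after that is a routine planarity-and-monotonicity argument. I would be careful to dispose of the valley-following segments of descent paths and of possible tangency or shared-endpoint situations, so that the final count of boundary points is exactly at most two rather than merely finite.
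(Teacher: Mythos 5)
Your overall strategy is essentially the paper's: everything hinges on the single key fact that a steepest-descent path can meet a ridge only at the path's own starting point, after which only vertical cuts can subdivide $e$, and some bookkeeping (your two-supporting-lines/$x$-monotonicity argument, or the paper's short induction over successive cuts) finishes the proof. However, your justification of that key fact --- the step you yourself single out as the crux --- is not valid. You argue that since the ridge is a transverse local maximum and a descent path is strictly decreasing in height, the path ``would have to ascend to reach the ridge line.'' This inference fails because the ridge line is slanted: the height function restricted to the ridge is itself strictly decreasing along the ridge, and each incident face contains points arbitrarily close to a given interior ridge point $q$ that are strictly \emph{higher} than $q$ (points lying uphill along the ridge direction). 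A strictly descending path can therefore reach, and cross, a convex ridge by approaching it obliquely from the uphill side --- just as a hiker can cross a mountain ridge while walking downhill the whole time. Monotone decrease of height alone is not the obstruction.

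What actually blocks the crossing is the \emph{steepest} part of ``steepest descent,'' i.e., the structure recalled in the paper's preliminaries: inside a face, a descent path is a straight segment orthogonal to that face's base edge, and it runs into $\partial \poly$ or into a valley, never into a ridge. A clean way to prove this: let $d_1,d_2$ be the distances to the supporting lines of the base edges of the two faces $f_1,f_2$ meeting along the ridge, so that near the ridge the terrain height equals $d_1$ on $f_1$ and $d_2$ on $f_2$, the ridge projects into the bisector $\{d_1=d_2\}$, and $f_1$ lies in $\{d_1 \le d_2\}$. Along a steepest-descent segment inside $f_1$, the quantity $d_1$ decreases at unit rate while $d_2$ can decrease at rate at most $1$; hence $d_2-d_1$ is non-decreasing, so a path starting at a point with $d_2-d_1>0$ never reaches $\{d_1=d_2\}$. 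With this repair (or by simply invoking the characterization of descent paths stated in Section~\ref{sec:prelim}), the rest of your argument --- the interior of a ridge avoids $\partial\poly$, so all boundary contacts of the interior of $e$ lie on $\ell_i^-$ or $\ell_i^+$, and $x$-monotonicity of $e$ then yields a single segment meeting $\partial\cell_i$ in at most two points --- does go through, and is a valid static alternative to the paper's induction over the sequence of vertical cuts.
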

\begin{proof}
As $e$ is a convex edge, the only descent paths that can meet $e$ are descent paths that start from $e$. So $e$ can only be partitioned by a vertical line cut through its interior.  When we perform one such subdivision along a segment of $e$, it is split into two segments, one on each side of the cutting line, and these segments now belong to two different cells. When we repeat the process, it remains true that $e \cap \cell_i$ is a segment, and that it can only meet $\partial \cell_i$ at its endpoints.
\end{proof}

An {\em empty cell} is a cell of $\subdiv$ whose interior does not overlap with $\ske'$. (See \figurename~\ref{fig:emptycells}.) Thus an empty cell is entirely contained in a face of $\ske'$. 
\begin{figure}
	\begin{subfigure}[b]{.45\textwidth}
	\centering
	\centering \includegraphics[width=\textwidth]{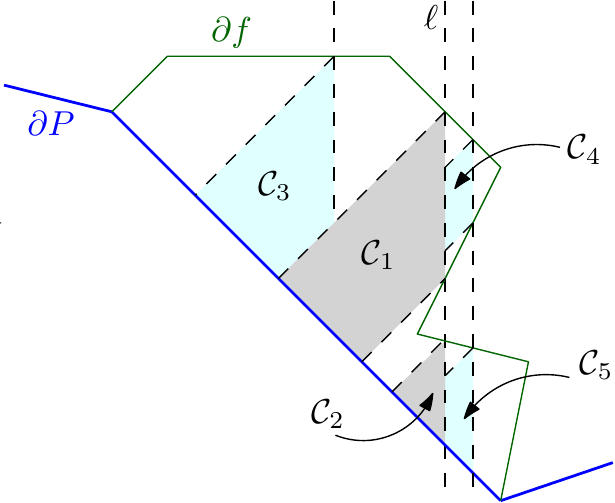}
	\caption{The cells $\cell_1,\dots,\cell_5$ are empty. The first cut is performed along $\ell$.
	\label{fig:emptycells}}
	\end{subfigure}
	\hspace{1ex}
	\begin{subfigure}[b]{.52\textwidth}
	\centering \includegraphics[width=\textwidth]{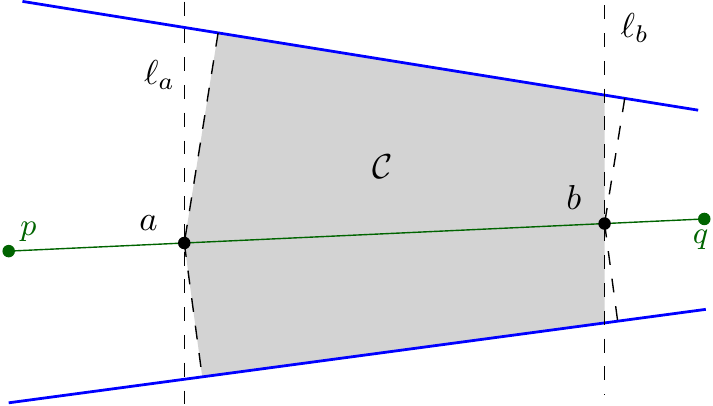}
	\caption{The wedge $\cell$ corresponding to $\overline{ab}$.
	\label{fig:wedge}}
	\end{subfigure}
	\caption{Empty cells  and a wedge.}
\end{figure} 
Another type of cell, called a {\em wedge}, will play an important role in the analysis of our algorithm.  Let $\overline{pq}$ be a ridge of $\ske'$, and let $a,b$ be two points in the interior of $\overline{pq}$. Let $\ell_a$ and $\ell_b$ be the vertical lines through $a$ and $b$, respectively. Consider the subdivision of $\poly$ obtained by inserting vertical boundaries along $\ell_a$ and $\ell_b$, and the four descent paths from $a$ and $b$. (See \figurename~\ref{fig:wedge}.) The cell of this subdivision containing $\overline{ab}$ is called the wedge corresponding to $\overline{ab}$. The lemma below shows that wedges are the only cells that can overlap the interior of a ridge, without enclosing any of its endpoints.

\begin{lemma}\label{lem:wedge}
Let $\cell_i$ be a cell overlapping a ridge, but not its endpoints. Then $\cell_i$ is a wedge.
\end{lemma}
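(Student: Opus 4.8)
The plan is to prove Lemma~\ref{lem:wedge} by induction on the steps of the subdivision, tracking exactly how a cell can come to overlap the interior of a ridge $e = \overline{pq}$ without enclosing either endpoint. The base case is trivial: the initial cell $\cell_1 = \poly$ contains the endpoints of every ridge, so there is no cell overlapping a ridge but not its endpoints. The inductive step must show that whenever a subdivision produces a new cell $\cell_i$ that overlaps the interior of $e$ but avoids both $p$ and $q$, that cell has the structure of a wedge, namely it is bounded on the left and right by vertical edges lying in $\ell_i^-$ and $\ell_i^+$, and bounded above and below by descent paths emanating from the two points $a = e \cap \ell_i^-$ and $b = e \cap \ell_i^+$.

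First I would invoke Lemma~\ref{lem:ridge}: since $e$ is a ridge, $e \cap \cell_i$ is a single line segment $\overline{ab}$, and $e$ meets $\partial\cell_i$ in at most two points. Because $\cell_i$ overlaps the interior of $e$ but contains neither $p$ nor $q$, both endpoints $a,b$ of this segment must lie in the interior of $\overline{pq}$ and on $\partial \cell_i$. The key structural fact, again from the proof of Lemma~\ref{lem:ridge}, is that a ridge can only be cut by vertical lines through its interior, never by a descent path crossing it transversally; the only descent paths meeting $e$ are the ones that start on $e$ itself. This forces $a$ and $b$ to be the intersection points of $e$ with the two vertical bounding lines of $\cell_i$. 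By the observation preceding Lemma~\ref{lem:convexity}, every vertical edge of $\cell_i$ lies in $\ell_i^-$ or $\ell_i^+$, so $a$ lies on one of these vertical lines and $b$ on the other; hence $\ell_i^- = \ell_a$ and $\ell_i^+ = \ell_b$ in the notation of the wedge definition.

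Next I would argue that the remaining boundary of $\cell_i$, above and below the segment $\overline{ab}$, must consist precisely of the descent paths from $a$ and $b$. Since $a$ and $b$ are interior points of the ridge $e$, each lies on $\terrain$ as a convex edge point, so by the discussion of the lifting map there are exactly two steepest descent paths from each of $\hat a$ and $\hat b$, one into each of the two faces of $\terrain$ bounding $e$. When the vertical cuts through $a$ and $b$ were introduced, the algorithm inserted the vertical projections of these descent paths as new edges of $\subdiv$. The cell of $\subdiv$ that contains $\overline{ab}$ is therefore bounded exactly by the two vertical segments on $\ell_a$ and $\ell_b$ and by the four descent paths from $a$ and $b$, which is the definition of the wedge corresponding to $\overline{ab}$.

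The main obstacle I anticipate is handling the case where $a$ and $b$ are not introduced in a single subdivision step, but rather the cell $\cell_i$ acquires its two vertical boundaries at different stages of the recursion, with possibly other cells having existed in between. I would address this by arguing that once $\overline{ab} \subseteq e$ lies in the interior of a cell and that cell's two vertical boundaries through $a$ and $b$ are fixed, no subsequent cut can subdivide $\cell_i$ further without either cutting $e$ again (creating a vertical edge and splitting $\cell_i$ into two smaller cells, each still a wedge by the same reasoning) or running a descent path across $\cell_i$. A descent path crossing the interior of $\cell_i$ would have to cross the ridge $e$ transversally, which Lemma~\ref{lem:ridge} forbids, so the only way to refine $\cell_i$ is by a vertical cut through $\overline{ab}$, and the resulting pieces remain wedges. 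This closes the induction and establishes that any cell overlapping a ridge but not its endpoints is a wedge.
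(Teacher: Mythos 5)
Your proof is correct and follows essentially the same route as the paper's: both arguments rest on the observation (from Lemma~\ref{lem:ridge}) that a ridge meets only descent paths that start on it, so the extreme points $a,b$ of $e \cap \cell_i$ must lie on vertical cuts, and the descent paths traced from $a$ and $b$ when those cuts were made, together with the cuts themselves, force $\cell_i$ to coincide with the wedge of $\overline{ab}$. The only difference is presentational: you wrap the argument in an induction over subdivision steps, whereas the paper argues directly at the current time that no vertical cut---and hence no traced descent path---can lie in the interior of the wedge, since such a cut would have separated $a$ from $b$.
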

\begin{proof}
Let $a$ and $b$ be the points on $\partial \cell_i$ which are farthest along the ridge in either direction. A ridge can only meet descent paths that start from it, so $a$ and $b$ must each lie on a vertical cut, $\ell_a$ and $\ell_b$. No vertical cut has been made between $a$ and $b$, otherwise $a$ and $b$ could not be in the same cell. So there is no vertical cut in the interior of the wedge corresponding to $\overline{ab}$, and thus no descent path has been traced inside this wedge. It follows that this wedge is $\cell_i$.
\end{proof}

\subsection{Data structure}\label{sec:datastructure}
During the course of the algorithm, we maintain the polygon $\poly$ and its subdivision $\subdiv$ in a doubly-connected edge list~\cite{4M}. So each cell $\cell_i$ is represented by a circular list of edges, or several if it has holes. In the following, we show how we augment these chains so that they record incidences between the boundary of $\cell_i$ and the faces of $\ske'$.  

For each cell $\cell_i$, let $\ske'_i$ be the subdivision of $\cell_i$ induced by $\ske'$. So the faces of $\ske'_i$ are the connected components of $\cell_i \setminus \ske'$. Let $Q$ denote a circular list of edges that form one component of $\partial \cell_i$. We subdivide each vertical edge of $Q$ at each intersection point with an edge of $\ske'$. Now each edge $e$ of $Q$ bounds exactly one face $f_j$ of $\ske'_i$. We store a pointer from $e$ to the slab $\sigma_j$ corresponding to $f_j$. In addition, for each vertex of $Q$ which is a reflex vertex of $\poly$, we store pointers to the two corresponding motorcycle slabs. We call this data structure a {\em face list}. So we store one face list for each connected component of $\partial \cell_i$.
(See \figurename~\ref{fig:facelists}.)

\begin{figure}
	\centering \includegraphics[width=\textwidth]{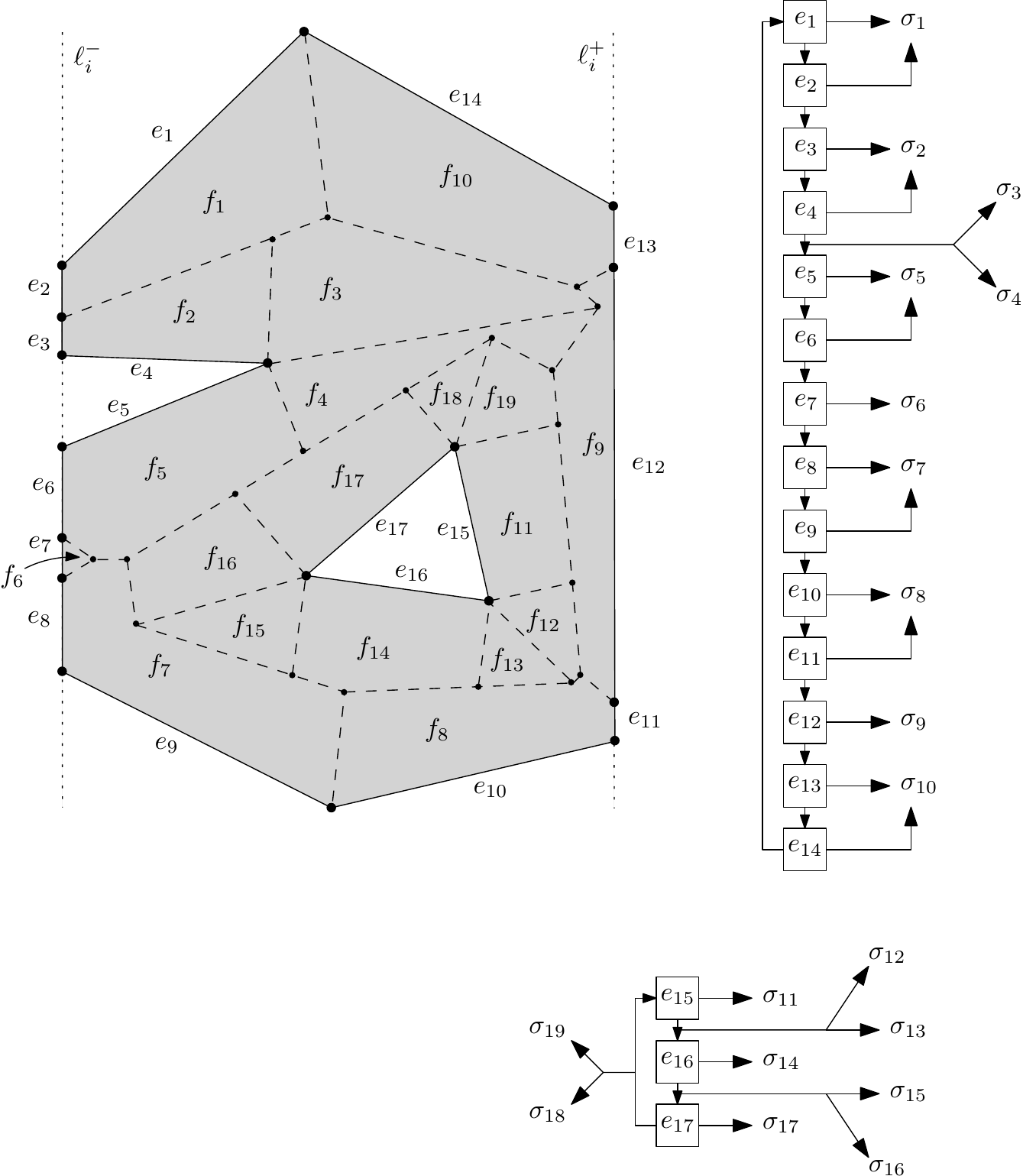}
	\caption{The face lists for the cell $\cell_i$ bounded by the vertical line cuts $\ell_i^-$ and
	$\ell_i^+$. The faces are denoted by $f_1,\dots,f_{19}$ and the corresponding slabs are 
	$\sigma_1,\dots,\sigma_{19}$. The face lists point to these slabs, as the exact shape of the
	faces of $\ske'$ is not known.\label{fig:facelists}}
\end{figure}

Lemma~\ref{thm:holes} makes an observation that will be used in subsequent lemmas.

\begin{lemma}
\label{thm:holes}
A hole of $\cell_i$ is necessarily a hole of $\poly$.
\end{lemma}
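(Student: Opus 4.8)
The plan is to prove Lemma~\ref{thm:holes} by tracking how holes in the cells $\cell_i$ can arise during the recursive subdivision, and showing that the only surviving holes are the original holes of $\poly$. I would argue by induction on the steps of the algorithm, following the same structural pattern used in the proofs of Lemmas~\ref{lem:convexity}--\ref{lem:wedge}. The base case is immediate: the initial cell is $\cell_1=\poly$, so its holes are exactly the holes of $\poly$.

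For the inductive step, I would examine what happens when a cell $\cell_i$ is subdivided by a vertical line $\ell$ together with the descent paths emanating from the points of $\ske'\cap\ell$. The key observation is that every new boundary edge we insert is connected to the existing boundary $\partial\cell_i$: each vertical segment $s_j$ of $\cell\cap\ell$ has its endpoints on $\partial\cell_i$, and each descent path starts from a point on one of these segments (or on a ridge/valley already traced) and terminates when it reaches $\partial\poly$ or a valley. So the newly inserted edges do not form a closed curve in the interior of $\cell_i$ that is disjoint from $\partial\cell_i$; rather, they attach to the existing boundary and merely partition $\cell_i$ into pieces. The heart of the argument is therefore: a subdivision that only inserts arcs whose endpoints lie on the current boundary cannot create a new hole, because a hole is a bounded complementary region enclosed by a boundary component that is disjoint from the outer boundary of the cell. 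Since every inserted arc is anchored to $\partial\cell_i$, any hole of a child cell must already have been a hole of $\cell_i$, and by the induction hypothesis a hole of $\poly$.

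The step I expect to be the main obstacle is making precise that a descent path, once it reaches a valley, does not curl back to enclose a region and create a spurious hole. By Lemma~\ref{lem:convexity} and Lemma~\ref{lem:valley}, a descent path bends only at a valley, and at any time a valley is split into a portion covered by cell boundaries (adjacent to the reflex vertex $p$) and a single remaining segment in a cell interior. I would use this to argue that when a descent path reaches a valley it continues along the already-boundary portion $\overline{pa}$ toward the reflex vertex, which lies on $\partial\poly$, so the path ultimately terminates on the outer boundary rather than closing up on itself. Thus the descent paths, like the vertical cuts, always connect to $\partial\cell_i$ and never generate an isolated closed boundary in the interior.

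Putting these pieces together, I would conclude that no subdivision step can create a hole that was not already present, and hence every hole of every cell $\cell_i$ traces back through the induction to a hole of $\poly$. I do not expect this proof to require any calculation; the work is entirely in correctly invoking the connectivity of the inserted edges to $\partial\cell_i$ and citing Lemmas~\ref{lem:convexity} and \ref{lem:valley} to rule out closed loops forming along valleys.
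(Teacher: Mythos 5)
Your proposal is correct and takes essentially the same approach as the paper's proof: an induction in which every newly inserted edge (vertical segment or descent path) is observed to connect to the existing boundary $\partial \cell_i$, either directly or via a descent path terminating on $\partial \poly$, so no subdivision step can create a new hole and every hole traces back to the initial cell $\cell_1 = \poly$. Your additional care about descent paths bending along valleys merely elaborates the same connectivity observation that the paper states in a single line.
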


\begin{proof}
When we subdivide $\cell_i$, each newly added edge either connects directly to $\partial \cell_i$, or connects to $\partial \cell_i$ via a descent path. Since a hole is a connected component of the boundary of the cell, it follows that no new holes are create in the algorithm. The initial cell $\cell_1$ contains holes which are precisely the holes of $\poly$. 
\end{proof}

We say that a vertex $v$ of the motorcycle graph $\motor$ {\em conflicts} with a cell $\cell_i$ of $\subdiv$ if either $v$ lies in the interior of $\cell_i$, or $v$ is a reflex vertex  of $\partial \cell_i$. We also store the list of all the vertices conflicting with each cell $\cell_i$. This list $V_i$ is called the {\em vertex conflict list} of $\cell_i$. The size of this list is denoted by $v_i$.
In summary, our data structure consists of:
\begin{itemize}
\item A doubly-connected edge list storing $\subdiv$.
\item The face lists and the vertex conflict list $V_i$ of each cell $\cell_i$.
\end{itemize}

We say that an edge $e$ of $\ske'$ conflicts with the cell $\cell_i$ if it intersects the interior of $\cell_i$. So any edge of $\ske'_i$ that is not on $\partial \cell_i$ is of the form $e \cap \cell_i$ for some edge $e$ of $\ske'$ conflicting with $\cell_i$. We denote by $c_i$ the number of edges conflicting with $\cell_i$. During the course of the algorithm, we do not necessarily know all the edges conflicting with a cell $\cell_i$, and we don't even know $c_i$, but this quantity will be useful for analyzing the running time. In particular, it allows us to bound the size of the data structure for $\cell_i$.

\begin{lemma}\label{lem:structuresize}
If $\cell_i$ is non-empty, then the total size of the face lists of $\cell_i$ is $O(c_i)$. In particular, it implies that $\partial \cell_i$ has $O(c_i)$ edges, and $\cell_i$ overlaps $O(c_i)$ faces of $\ske'$. On the other hand, if $\cell_i$ is empty, then the total size is $O(1)$, and thus $\partial \cell_i$ has $O(1)$ edges.
\end{lemma}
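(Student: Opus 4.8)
The plan is to bound the total size of the face lists by charging each element of these lists to a conflicting edge of $\ske'$, so that each such edge is charged only a constant number of times. Recall that a face list for a component of $\partial\cell_i$ consists of the edges along that component (each pointing to a slab) together with, at each reflex vertex of $\poly$ on the boundary, pointers to two motorcycle slabs. So the total size is dominated by the number of edges of $\partial\cell_i$, and it suffices to show that $\partial\cell_i$ has $O(c_i)$ edges when $\cell_i$ is non-empty, and $O(1)$ edges when $\cell_i$ is empty.

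First I would classify the edges of $\partial\cell_i$ by how they were created. Each boundary edge is either a piece of $\partial\poly$, a vertical edge (lying in $\ell_i^-$ or $\ell_i^+$ by the remark preceding Lemma~\ref{lem:convexity}), or a piece of a descent path. The vertical edges lie on just two vertical lines, and each maximal run of collinear vertical edges is subdivided only at its crossings with edges of $\ske'$; by Lemma~\ref{lem:ridge} and Lemma~\ref{lem:valley} each ridge or valley crosses a given vertical line of $\cell_i$ in a bounded way, so the number of vertical sub-edges along $\ell_i^-$ and $\ell_i^+$ is $O(c_i)$. The descent-path edges are the main contribution: each descent path bounding $\cell_i$ emanates from a point of $\ske'\cap\ell$ on some earlier cut and runs to $\partial\cell_i$, possibly bending once when it reaches a valley (Lemma~\ref{lem:convexity} identifies these bend points as the only source of reflex vertices other than those of $\poly$). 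I would charge each descent-path edge, and each boundary vertex, to the edge of $\ske'$ that bounds the face of $\ske'_i$ locally on the inner side of that boundary edge. Since each edge $e$ of $\partial\cell_i$ bounds exactly one face $f_j$ of $\ske'_i$ (as set up in the face-list construction), and the faces of $\ske'_i$ are in bijection with the slabs, hence with the edges of $\ske'$ conflicting with $\cell_i$ together with a bounded number of boundary faces, this charging assigns $O(1)$ boundary edges to each of the $c_i$ conflicting edges. This yields $O(c_i)$ edges in total, and the claim that $\cell_i$ overlaps $O(c_i)$ faces of $\ske'$ follows since each overlapped face contributes at least one conflicting edge.

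The main obstacle I anticipate is controlling the number of boundary faces that are not charged to a conflicting edge, i.e.\ ruling out a cell whose boundary is intricate even though few edges of $\ske'$ cross its interior. The key structural facts that prevent this are Lemma~\ref{lem:valley} (each valley or flat edge enters the cell interior at most once, so valleys cannot repeatedly re-enter and fragment the boundary) and Lemma~\ref{lem:wedge} together with Lemma~\ref{lem:ridge} (a ridge crossing the interior of $\cell_i$ does so as a single segment, and a cell overlapping a ridge without its endpoints is a wedge, which has $O(1)$ complexity). These ensure that the faces of $\ske'_i$ bounded by $\partial\cell_i$ correspond injectively, up to a constant factor, to edges of $\ske'$ conflicting with $\cell_i$.

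For the empty case, $\cell_i$ lies entirely within one face of $\ske'$, so $c_i=0$ and there are no interior crossings. I would then observe that the arguments above, specialized to $c_i=0$, leave only $O(1)$ possible boundary edges: the cell is bounded by at most two vertical lines, pieces of $\partial\poly$, and a bounded number of descent-path edges inherited from the subdividing cuts, none subdivided by $\ske'$. Hence $\partial\cell_i$ has $O(1)$ edges and the face lists have total size $O(1)$.
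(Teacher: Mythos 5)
Your overall strategy—charging the boundary complexity of $\cell_i$ to the edges of $\ske'$ that conflict with it, and treating the empty case as the degenerate instance of the same bound—is the same high-level approach as the paper's, but your plan has a genuine gap at exactly the step where all the work lies. Your charging sends each boundary edge to the face of $\ske'_i$ on its inner side, and then you assert that this "assigns $O(1)$ boundary edges to each of the $c_i$ conflicting edges." That assertion needs a proof that a single face of $\ske'_i$ can be bounded by only $O(1)$ edges of $\partial\cell_i$, and none of the lemmas you invoke gives this. Lemmas~\ref{lem:valley}, \ref{lem:ridge} and~\ref{lem:wedge} control how many times an edge of $\ske'$ can meet $\partial\cell_i$ and how many faces of $\ske'_i$ there can be (your "injectivity" remark), but they say nothing about how many cell-boundary edges a single face can carry. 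A priori, a cell inherits boundary pieces from many ancestor subdivisions, so one face of $\ske'_i$ could be bounded by a long chain of descent-path and vertical edges even though few edges of $\ske'$ cross the cell's interior; this is precisely the "intricate boundary" scenario you name as the main obstacle, and your plan resolves it only by assertion.

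The paper closes this gap with a convexity argument that is absent from your proposal. By Lemma~\ref{lem:convexity}, every reflex vertex of $\cell_i$ lies on a valley, and at such a vertex the two incident boundary edges bound the \emph{two different} faces of $\ske'_i$ incident to that valley; hence any maximal subchain of $\partial\cell_i$ bounding a single face of $\ske'_i$ is convex. Moreover the edges of such a subchain can take only three directions relative to that face's slab: vertical, parallel to the slab's source edge, or the steepest-descent direction. A convex chain whose edges use three directions has at most five edges, so each face contributes $O(1)$ boundary edges, the transitions between consecutive subchains are charged (at most twice each, by Lemmas~\ref{lem:valley} and~\ref{lem:ridge}) to conflicting edges, and the bound $O(c_i+1)$ follows—covering the empty case $c_i=0$ as well, where your argument again only asserts that the number of inherited descent-path edges is bounded. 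Note that both ingredients are needed: convexity alone does not bound the length of a chain (a convex chain can have arbitrarily many edges), and the three-direction restriction alone does not prevent a zigzag staircase of parallel descent edges. A secondary omission: you never bound the contribution of hole components of $\partial\cell_i$; the paper handles these separately via Lemma~\ref{thm:holes}, charging each hole vertex to a conflicting edge incident to it.
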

\begin{proof} Let $Q$ denote the outer boundary of $\cell_i$, and let $|Q|$ denote its number of edges. By Lemma~\ref{lem:convexity}, each reflex vertex $p$ of $Q$ is in a valley, and the two edges of $Q$ incident to $p$ bound the two faces of $\ske'_i$ incident to this valley. So any subchain $Q'$ of $Q$ that bounds only one face $f'$ of $\ske'_i$ must be convex. The edges of $Q'$ can take only 3 directions: vertical, parallel to the base edge of $f$, or the steepest descent direction. So $Q'$ can have at most 5 edges: two vertical edges, two edges parallel to the steepest descent direction, and one edge along the base edge of $f'$.

Thus, $Q$ can be partitioned into at least $|Q|/5$ subchains, such that two consecutive subchains bound different faces. Any vertex of $Q$ at which two consecutive subchains meet must be incident to an edge $e$ of $\ske'_i$ that conflicts with $\cell_i$. By Lemma~\ref{lem:valley} and~\ref{lem:ridge}, this edge can meet $\partial \cell_i$ at most twice. So in total, $Q$ has at most $10(c_i+1)$ edges.

Now consider the holes of $\cell_i$, if any. Such a hole must be a hole of $\poly$ according to Lemma~\ref{thm:holes}, so each vertex along its boundary is the endpoint of at least one edge that conflicts with $\cell_i$. Each conflicting edge is adjacent to at most one hole vertex, so there are $O(c_i)$ such vertices in $\cell_i$. In addition, each edge of  a hole bounds only one face, and for each reflex vertex, another two faces corresponding to motorcycle slabs are added. So in total, the face lists for holes have size $O(c_i)$.

We just proved that the total size of the face lists is $O(c_i+1)$. If $c_i$ is non-empty, we have $c_i\geq 1$, and thus the bound can be written $O(c_i)$. Otherwise, if $\cell_i$ is empty, then it does not conflict with any edge, so $c_i=0$. Hence, the data structure has size $O(1)$.
\end{proof}

\subsection{Algorithm}\label{sec:AlgoVertical}
Our algorithm partitions $\poly$ recursively, using vertical cuts, as in Sect.~\ref{sec:verticalcut}. In this section, we show how to perform a step of this subdivision in  near-linear time. A cell $\cell_i$ is subdivided along a vertical cut through its median conflicting vertex, so the vertex conflict lists of the new cells will be at most half the size of the conflict lists of $\cell_i$. When the vertex conflict list of $\cell_i$ is empty, we call the procedure {\scshape Divide-Valley} presented in Section~\ref{sec:valley}. If $\cell_i$ is empty or is a wedge, then we stop subdividing $\cell_i$, and it becomes a {\em leaf cell}. 

We now describe in more details how we perform this subdivision efficiently. We assume that the cell $\cell_i$ conflicts with at least one vertex, and  that $\cell_i$ is given with the corresponding data structure as described in Sect.~\ref{sec:datastructure}.  We first find the median conflicting vertex in time $O(v_i)$. We compute the list of vertical boundary segments $s_1,\dots,s_q$ created by the cut along the vertical line $\ell$ through the median vertex. This list is sorted along $\ell$, and it can be constructed in time proportional to the number of edges bounding $\cell_i$, which is $O(c_i)$ by Lemma~\ref{lem:structuresize}.

Then we compute the lifted polylines $\hat s_1,\dots,\hat s_q$ as follows. Let $H$ denote the vertical plane through $\ell$.  We first find the list of slabs corresponding to the faces of $\ske'_i$.  We obtain this list as the union of the slabs that appear in the face lists of $\cell_i$. We  compute the intersection of each such slab with $H$. This gives us a set of $O(c_i)$ segments in $H$, of which we compute the lower envelope. It can be done in $O(c_i \log c_i)$ time using an algorithm by Hershberger~\cite{envelope}. Then we obtain $\hat s_1,\dots,\hat s_q$ by scanning through this lower envelope and the list $s_1,\dots,s_q$. Overall it takes time $O(c_i \log c_i)$ to compute this lower envelope, and it has $O(c_i)$ edges, as each edge of $\ske'_i$ or $\cell_i$ creates at most one vertex along this chain.

The partition induced by $\ell$ is obtained by tracing steepest descent paths from $s_1, \dots, s_q$. For a vertical edge $s_j$, any point where $\hat s_j$ changes direction, when projected onto the horizontal plane, corresponds precisely to a point where $s_j$ intersects   an edge $e$ of $\ske'_i$. At each of these points, we do the following without actually knowing $\ske'_i$. There are at most three steepest descent paths from $a=\hat e \cap \hat s_j$, one for each slab through $a$. Each such descent path consists of one line segment along the slab, followed possibly by another line segment along a valley in the case where the slab is a motorcycle slab. Let $\gamma$ denote one of these descent paths. As we know the slab and the starting point of $\gamma$, we can construct $\gamma$ in constant time. This path $\gamma$ goes all the way to $\partial \poly$, so if necessary, we clip it at $\ell_i^-$ or $\ell_i^+$ to obtain its restriction to $\cell_i$.

These descent paths cannot cross, and by construction they do not cross the vertical boundary edges. Each edge of $\ske'_i$ may create at most three such descent paths, so we create $O(c_i)$ such new descent paths. There are also $O(c_i)$ new vertical edges, so we can update the doubly-connected edge list in time $O(c_i \log c_i)$ by plane sweep. Using an additional $O(v_i \log c_i)$  time, we can update the vertex conflict lists during this plane sweep. The face lists can be updated in overall $O(c_i)$ time by splitting the face lists of $\cell_i$ along the lower endpoints of the new descent paths, and inserting new subchains  along each vertical edge $s_j$, which we obtain directly from $\hat s_j$ in linear time.  So we just proved the following:
\begin{lemma}\label{lem:onesubdivision}
We can compute the subdivision of a non-empty cell $\cell_i$ induced by a line through  its median conflicting vertex, and update our data structure accordingly, in $O((c_i+v_i) \log c_i)$ time.
\end{lemma}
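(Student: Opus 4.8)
The plan is to account for the cost of each phase described in the algorithm of Section~\ref{sec:AlgoVertical}, and to check that every phase fits within the claimed $O((c_i+v_i)\log c_i)$ bound. I would organize the argument as a sequence of bookkeeping steps mirroring the three main tasks: finding the cutting line, computing the lifted polylines $\hat s_1,\dots,\hat s_q$ along the vertical plane $H$, and finally tracing the descent paths while updating the doubly-connected edge list, the face lists, and the vertex conflict lists.

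First I would bound the preliminary work. Finding the median conflicting vertex takes $O(v_i)$ time by a standard linear-time selection, and constructing the sorted list of vertical segments $s_1,\dots,s_q$ along $\ell$ takes time proportional to the number of boundary edges of $\cell_i$, which is $O(c_i)$ by Lemma~\ref{lem:structuresize} (recall we assume $\cell_i$ is non-empty, so $c_i\geq 1$). Next I would handle the lifted polylines: the slabs relevant to $\cell_i$ are exactly those appearing in its face lists, again $O(c_i)$ many by Lemma~\ref{lem:structuresize}; intersecting each with $H$ yields $O(c_i)$ segments, and computing their lower envelope via Hershberger's algorithm~\cite{envelope} costs $O(c_i\log c_i)$. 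The resulting envelope has $O(c_i)$ edges since each edge of $\ske'_i$ or of $\cell_i$ contributes at most one breakpoint, so merging it with $s_1,\dots,s_q$ to read off the $\hat s_j$ is an $O(c_i)$ scan.

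Then I would account for the update phase. Each breakpoint of some $\hat s_j$ corresponds to an intersection of $s_j$ with an edge of $\ske'_i$, and from the known slab and starting point each descent path $\gamma$ is constructed in constant time (one segment along the slab, possibly followed by one along a valley), then clipped to $\cell_i$ at $\ell_i^-$ or $\ell_i^+$ if needed. Since each conflicting edge spawns at most three such paths, there are $O(c_i)$ new descent paths and $O(c_i)$ new vertical edges; as these non-crossing paths can be inserted into the doubly-connected edge list by a plane sweep, this step costs $O(c_i\log c_i)$. The vertex conflict lists are updated during the same sweep by locating each of the $O(v_i)$ conflicting vertices, adding $O(v_i\log c_i)$, and the face lists are updated in $O(c_i)$ time by splitting along the lower endpoints of the descent paths and inserting the subchains read directly from the $\hat s_j$. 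Summing all contributions gives $O(v_i)+O(c_i\log c_i)+O(v_i\log c_i)=O((c_i+v_i)\log c_i)$.

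I expect the only genuinely delicate point to be justifying that the lower-envelope computation and the subsequent path tracing can be carried out \emph{without} explicit knowledge of $\ske'_i$, relying solely on the slab pointers stored in the face lists; the rest is a routine summation of per-phase costs. The key supporting facts are Lemma~\ref{lem:structuresize}, which certifies that the face lists and boundary have size $O(c_i)$ so that the envelope input is not too large, together with Lemmas~\ref{lem:valley} and~\ref{lem:ridge}, which guarantee that each conflicting edge meets $\partial\cell_i$ only a bounded number of times and hence produces only a bounded number of descent paths to be traced.
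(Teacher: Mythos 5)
Your proposal is correct and follows essentially the same argument as the paper: the paper's proof of Lemma~\ref{lem:onesubdivision} is exactly the cost accounting of Section~\ref{sec:AlgoVertical} --- $O(v_i)$ for the median, $O(c_i)$ for the boundary segments via Lemma~\ref{lem:structuresize}, $O(c_i\log c_i)$ for Hershberger's lower envelope of the slab cross-sections in $H$, constant-time tracing of each of the $O(c_i)$ descent paths, an $O((c_i+v_i)\log c_i)$ plane sweep to update the doubly-connected edge list and vertex conflict lists, and $O(c_i)$ to split the face lists. The only cosmetic difference is that you cite Lemmas~\ref{lem:valley} and~\ref{lem:ridge} directly, whereas the paper invokes them only indirectly through Lemma~\ref{lem:structuresize}.
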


\begin{algorithm}
\caption{Vertical subdivision} \label{alg:vertical}
\begin{algorithmic}[1]
  	\Procedure{Divide-Vertical}{$\cell_i$}
    	\State Select median vertex in $V_i$, and draw the vertical line $\ell$ through it.
  	\State Construct the vertical edges $s_1, \dots ,s_q$ of $\ell \cap \cell_i$.
	\State Compute the lower envelope of the slabs along the vertical plane  through $\ell$.
	\State Construct the lifted version $\hat s_1,\dots,\hat s_q$ of the vertical boundary segments.
	\State Trace within $\cell_i$ the steepest descent paths from each vertex of $\hat s_1, \dots, \hat s_q$.
	\State Update $\subdiv$ using $s_1,\dots,s_q$ and the descent paths as new boundaries.
	\For{each child cell $\cell_j$ of $\cell_i$}
		\State Construct the data structure for $\cell_j$.
	    	\If{$\cell_j$ is a wedge or is empty}
			\State Compute $\ske'_j$ by brute force.

		    \Else
			    \If{$V_j=\emptyset$}
	    			\State Call {\scshape Divide-Valley}$(\cell_i)$
    			\Else
	   				\State Call {\scshape Divide-Vertical}$(\cell_j)$.
				\EndIf
    		\EndIf
	\EndFor
	\EndProcedure
\end{algorithmic}
\end{algorithm}

\subsection{Analysis}
In the previous section, we saw that the vertical subdivision of each cell $\cell_i$ can be obtained in time near-linear in the size of the data structure for $\cell_i$. We now bound the overall running time of the algorithm, so we need to bound the sum $\sum_i c_i+v_i$ over all cells created by {\scshape Divide-Vertical}. 

We use the {\em recursion tree} associated with Algorithm~\ref{alg:vertical}. Each node $\nu$ of this tree represents a cell $\cell_i$, and the child cells of $\cell_i$ are stored at the descendants of $\nu$ in the recursion tree. In particular, the cells stored at the descendants of $\nu$ form a partition of the cell stored at $\nu$. Each time we subdivide a cell $\cell_i$, the conflict list of each new cell has at most half the size of the conflict list of $\cell_i$. As there are at most $2r$ vertices in $\motor$, it follows that:
\begin{lemma}\label{lem:recursiontree}
The recursion tree of {\scshape Divide-Vertical} has depth $O(\log r)$. 
\end{lemma}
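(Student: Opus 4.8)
My plan is to reduce the depth bound to a single quantitative fact: each vertical subdivision step at least halves the vertex conflict list. Associate with every node $\nu$ of the recursion tree the size $v_i$ of the conflict list $V_i$ of the cell $\cell_i$ it stores, and write this quantity as $v(\nu)$. The root stores $\cell_1=\poly$, and since $\motor$ has at most $2r$ vertices, $v(\text{root})\le 2r$. If I can show that every child $\nu'$ of a node $\nu$ satisfies $v(\nu')\le v(\nu)/2$, then a one-line induction gives $v(\nu)\le 2r/2^{d}$ for any node at depth $d$. Crucially, {\scshape Divide-Vertical} recurses on a child cell $\cell_j$ only when $V_j\neq\emptyset$, that is, when $v(\nu')\ge 1$; otherwise it either stops (empty cell or wedge) or hands control to {\scshape Divide-Valley}. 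Hence an internal node at depth $d$ must satisfy $2r/2^{d}\ge 1$, forcing $d\le\log_2(2r)=O(\log r)$, which is the claim.

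So everything rests on the halving inequality, and this is the step I expect to require the most care. When $\cell_i$ is cut by the vertical line $\ell$ through its median conflicting vertex $m$, every conflicting vertex strictly to the left of $\ell$ can only conflict with child cells lying to the left of $\ell$, and symmetrically on the right; by the defining property of the median, at most $\lfloor(v_i-1)/2\rfloor$ conflicting vertices lie strictly on each side. The delicate point is to show that $m$ itself is no longer a conflicting vertex of any child, so that the surviving count on each side is at most $\lfloor(v_i-1)/2\rfloor\le v_i/2$. After the cut, $m$ lies on the newly inserted vertical boundary along $\ell$; thus $m$ is not in the interior of any child, and the child boundary is locally straight at $m$ on the $\ell$-side. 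To rule out $m$ being a reflex vertex of a child I would invoke Lemma~\ref{lem:convexity}, which says a reflex vertex of a cell is either a reflex vertex of $\poly$ at which the cell boundary coincides with $\partial\poly$, or a point where a bounding descent path reaches a valley. The vertical cut replaces a boundary incident to $m$ by the straight segment along $\ell$, so the child boundary no longer coincides with $\partial\poly$ in a neighbourhood of $m$; and it is readily checked that $m$ is not a point where a bounding descent path bends into a valley. Neither alternative of Lemma~\ref{lem:convexity} can therefore hold at $m$, so $m$ drops out of all child conflict lists.

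Granting the halving inequality, the proof concludes exactly as sketched in the first paragraph: the conflict list size decays geometrically from at most $2r$ at the root, and because recursion proceeds only while this size is at least one, the recursion tree of {\scshape Divide-Vertical} has depth at most $\log_2(2r)+O(1)=O(\log r)$. I note that the branching factor of the tree is irrelevant here---only the per-level halving matters---so no separate accounting of the number of child cells produced per step is needed for the depth bound.
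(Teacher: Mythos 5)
Your proof is correct and follows essentially the same route as the paper: the paper likewise obtains the depth bound from the two facts that the cut through the median conflicting vertex leaves each child cell with at most half of the parent's vertex conflict list, and that $\motor$ has at most $2r$ vertices. The only difference is that you spell out why the median vertex itself drops out of all child conflict lists (via Lemma~\ref{lem:convexity}) and why halving plus the recursion's stopping rule forces depth $\log_2(2r)+O(1)$, details the paper asserts without proof.
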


The degree of any vertex in $\subdiv$ is at most 5, because there can be at most three descent paths through any point, as well as two vertical edges. It implies that any point of $\poly$ is contained in at most 5 cells at each level of the recursion tree. It follows that:
\begin{lemma}\label{lem:pointincell}
Any point in $\poly$ is contained in $O(\log r)$ cells of $\subdiv$ throughout the algorithm.
\end{lemma}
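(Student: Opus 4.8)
The plan is to combine two ingredients that are already in hand: a per-level bound on the number of cells containing a fixed point, and the depth bound of Lemma~\ref{lem:recursiontree}. I will work with the recursion tree of {\scshape Divide-Vertical}, grouping cells by their depth in this tree, and I will show that a fixed point $p \in \poly$ lies in only a constant number of cells at each depth; summing over the $O(\log r)$ depths then yields the claim. Thus the whole argument reduces to a ``constant per level, logarithmically many levels'' estimate.

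First I would make precise why the cells at a common depth behave like a planar partition near $p$. When a cell $\cell_i$ is subdivided, its children are interior-disjoint and their union is $\cell_i$; hence, by induction, the cells stored at nodes of any fixed depth $k$ are pairwise interior-disjoint. Consequently, among the depth-$k$ cells containing $p$, at most one can contain $p$ in its interior, and all the others must meet $p$ on their boundaries. The cells incident to $p$ are exactly the faces surrounding $p$ in the local planar picture, so their number is at most the number of edges of $\subdiv$ passing through $p$.

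Next I would invoke the degree bound stated just above the lemma: at most three descent paths and two vertical edges can pass through any point, so $p$ is incident to at most five edges and therefore to at most five cells at each depth. (A point in the interior of a cell lies in one cell, and a point in the relative interior of an edge lies in at most two cells, both consistent with this bound.) Finally, since Lemma~\ref{lem:recursiontree} guarantees that the recursion tree has depth $O(\log r)$, there are $O(\log r)$ distinct depths, and $p$ lies in at most five cells at each of them; multiplying gives that $p$ is contained in at most $5 \cdot O(\log r) = O(\log r)$ cells over the whole course of the algorithm.

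I expect the only delicate point to be the bookkeeping in the first step. One must argue that ``contained in a cell'' (interpreted to include the boundary) is counted correctly when $p$ is shared between several cells created at the same depth, and that indexing by recursion-tree depth accounts for \emph{every} cell ever produced, including leaf cells that stop subdividing at shallower depths. I would handle this by noting that the cells containing $p$ form a connected subtree of the recursion tree rooted at $\cell_1 = \poly$, so each such cell has a well-defined depth, and by the interior-disjointness observation the count at each depth is capped at five regardless of when those cells are created during the execution.
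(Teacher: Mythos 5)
Your proposal is correct and follows essentially the same argument as the paper: the degree-at-most-5 bound on vertices of $\subdiv$ gives at most five cells containing any fixed point per level of the recursion tree, and Lemma~\ref{lem:recursiontree} bounds the number of levels by $O(\log r)$. Your additional bookkeeping (interior-disjointness of same-depth cells, handling of leaf cells) just makes explicit what the paper leaves implicit.
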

In particular, if we apply this result to each of the $2r$ vertices of $\motor$, we obtain:
\begin{lemma}
\label{lem:vertexconflict}
Throughout the algorithm, the sum $\sum_i v_i$ of the sizes of the vertex conflict lists  is $O(r \log r)$. 
\end{lemma}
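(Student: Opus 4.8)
The plan is to combine Lemma~\ref{lem:pointincell} with a simple counting argument over the vertices of the motorcycle graph. The key observation is that the quantity $\sum_i v_i$ counts, over all cells $\cell_i$ created throughout the algorithm, the number of (cell, conflicting vertex) incidences. By definition of the vertex conflict list, a vertex $v$ of $\motor$ contributes to $v_i$ exactly when $v$ conflicts with $\cell_i$, that is, when $v$ lies in the interior of $\cell_i$ or is a reflex vertex of $\partial \cell_i$. So the plan is to swap the order of summation: instead of summing over cells the number of conflicting vertices, I would sum over vertices of $\motor$ the number of cells each one conflicts with.

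The main steps, in order, are as follows. First I would write $\sum_i v_i = \sum_{v} (\text{number of cells } \cell_i \text{ conflicting with } v)$, where the sum is over the vertices $v$ of $\motor$. Next, for a fixed vertex $v$, I would bound the number of cells it conflicts with by the number of cells that \emph{contain} $v$ (taking the point $v$ itself). A vertex conflicting with $\cell_i$ either lies in the interior of $\cell_i$ or on $\partial \cell_i$, and in both cases the point $v$ is contained in the closed cell $\cell_i$; invoking Lemma~\ref{lem:pointincell} applied to the point $v$, this number is $O(\log r)$. Finally, since $\motor$ has at most $2r$ vertices, summing the $O(\log r)$ bound over all of them yields $\sum_i v_i = O(r \log r)$.

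The one point that needs a little care is the passage from ``conflicts with $\cell_i$'' to ``is contained in a cell counted by Lemma~\ref{lem:pointincell}''. Lemma~\ref{lem:pointincell} counts cells containing a point, and a reflex boundary vertex is shared by several cells meeting at that point; however, Lemma~\ref{lem:pointincell} already accounts for this, since its proof bounds by $5$ the number of cells incident to any point at each of the $O(\log r)$ levels of the recursion tree, and this bound covers cells meeting the point on their boundary as well as those containing it in their interior. Thus each vertex $v$ is charged to at most $O(\log r)$ cells regardless of whether it lies in the interior or on the boundary, and no additional argument is required. This is a short deduction rather than a hard obstacle; the substantive work was already done in establishing Lemma~\ref{lem:pointincell}.
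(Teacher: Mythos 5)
Your proposal is correct and is essentially identical to the paper's own proof: the paper derives Lemma~\ref{lem:vertexconflict} precisely by applying Lemma~\ref{lem:pointincell} to each of the at most $2r$ vertices of $\motor$, which is the summation-swap argument you describe. Your extra remark about boundary versus interior containment is a fine clarification, and it is indeed already covered by the degree-$5$ bound underlying Lemma~\ref{lem:pointincell}.
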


We now bound the total number of conflicts between edges of $\ske'$ and cells of $\subdiv$. 
\begin{lemma}\label{lem:edgeconflict} Throughout the algorithm, each edge $e$ of $\ske'$ conflicts with $O(\log r)$ cells. It follows that $\sum_i c_i=O(n \log r)$.
\end{lemma}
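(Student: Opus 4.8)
The plan is to bound, for each edge $e$ of $\ske'$, the number of cells whose interior it crosses, by splitting into cases according to the type of $e$, and then to sum these bounds over all edges. The edges of $\ske'$ come in four flavours: edges of $\poly$, ridges, valleys, and flat edges. Throughout, I would work with the set $T_e$ of cells conflicting with $e$, and observe that $T_e$ forms a connected subtree of the recursion tree rooted at $\cell_1=\poly$: if $e$ meets the interior of a cell, then since that cell is contained in its parent, $e$ also meets the interior of the parent.

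The easy cases come first. An edge of $\poly$ lies on $\partial\poly$, hence on the boundary of every cell it touches, so it crosses no cell interior and contributes nothing. For a valley or flat edge $e=\overline{pq}$, Lemma~\ref{lem:valley} tells me that at any moment there is a single cell whose interior meets the tail of the edge. I would use this to argue that $T_e$ is in fact a path: whenever the unique cell conflicting with $e$ is subdivided, Lemma~\ref{lem:valley} guarantees that exactly one child inherits the interior portion $\overline{bq}$ of $e$, so each node of $T_e$ has at most one child in $T_e$. Since the recursion tree has depth $O(\log r)$ by Lemma~\ref{lem:recursiontree}, this path has length $O(\log r)$, giving the bound for valleys and flat edges.

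The main obstacle is the ridge case, since a ridge genuinely crosses many cells simultaneously, so the path argument breaks down. Here I would exploit Lemma~\ref{lem:wedge} together with Lemma~\ref{lem:pointincell}. Let $e$ be a ridge with endpoints $p$ and $q$. A node of $T_e$ that is subdivided (an internal node of $T_e$) is not a leaf, hence not a wedge; by the contrapositive of Lemma~\ref{lem:wedge}, it must therefore contain an endpoint of $e$, that is, $p$ or $q$. By Lemma~\ref{lem:pointincell}, each of $p$ and $q$ lies in $O(\log r)$ cells, so $T_e$ has only $O(\log r)$ internal nodes. Moreover, by Lemma~\ref{lem:ridge} the intersection $e\cap\cell_i$ is a single segment, and since a ridge can only be cut by a vertical line through its interior (descent paths meeting a ridge emanate from it and do not cross it), each subdivision splits this segment into at most two pieces. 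Hence every node of $T_e$ has at most two children in $T_e$. A tree with $O(\log r)$ internal nodes, each of degree at most two, has $O(\log r)$ nodes in total, so a ridge conflicts with $O(\log r)$ cells.

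Finally, I would assemble the count. Each edge of $\ske'$ conflicts with $O(\log r)$ cells, so $\sum_i c_i$, which counts exactly the (edge, cell) conflict pairs, equals $\sum_{e\in\ske'} O(\log r)=O(\log r)\,|\ske'|$. Since the straight skeleton of an $n$-vertex polygon has $O(n)$ edges and $\ske'$ adds only the $n$ edges of $\poly$ and the $2r=O(n)$ flat edges, we have $|\ske'|=O(n)$, yielding $\sum_i c_i=O(n\log r)$. The delicate point throughout is the ridge argument: everything hinges on recognizing that the branching of $T_e$ is confined to the $O(\log r)$ cells containing an endpoint of the ridge, which is precisely what Lemma~\ref{lem:wedge} and Lemma~\ref{lem:pointincell} supply.
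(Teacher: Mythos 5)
Your proof is correct, and it follows the paper's decomposition almost exactly: the same case split (polygon edges, valleys/flat edges, ridges), and the same key ingredients---Lemma~\ref{lem:valley} for the valley case, and Lemma~\ref{lem:wedge} plus Lemma~\ref{lem:pointincell} for the ridge case. The one place where your argument genuinely departs from the paper is the counting of wedges along a ridge. The paper uses a dynamic charging argument: it observes that after the first wedge $\overline{ab}$ is created, each subsequent wedge along $e$ requires a vertical cut through $\overline{pa}$ or $\overline{bq}$, which simultaneously splits a cell containing $p$ or $q$; so each new wedge is charged to the creation of a new endpoint-containing cell, of which there are $O(\log r)$. You instead make the counting static and structural: the conflicting cells form a subtree $T_e$ of the recursion tree, every internal node of $T_e$ is a subdivided (hence non-wedge) cell and therefore contains $p$ or $q$ by the contrapositive of Lemma~\ref{lem:wedge}, and Lemma~\ref{lem:ridge} caps the branching of $T_e$ at two, so $|T_e| = O(\log r)$. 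The two arguments charge the same objects (wedges) to the same resource (endpoint-containing cells), but your formulation is arguably tighter as a piece of writing: it replaces the paper's informal ``repeating this process'' step with an explicit tree-counting inequality, and it also makes visible why the valley case gives a path in the recursion tree rather than invoking Lemma~\ref{lem:pointincell} on the higher endpoint as the paper does (either justification works, since the path length is bounded by the $O(\log r)$ recursion depth of Lemma~\ref{lem:recursiontree}). Your final summation step, justifying $|\ske'| = O(n)$ explicitly, is also slightly more careful than the paper, which leaves it implicit.
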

\begin{proof} Let $p,q$ denote the endpoints of $e$. First we assume that $e$ is a ridge. By Lemma~\ref{lem:pointincell}, there are at most $O(\log r)$ cells containing $p$ or $q$, so it remains to bound the number of cells that overlap $e$ but not $\{p,q\}$. By Lemma~\ref{lem:wedge}, these must be wedges. There can only be a wedge along $e$ if at least two vertical cuts through $e$ have been made. When the second such cut is made, the wedge associated with a segment $\overline{ab} \subset e$ is created. Assume without loss of generality that $a$ is between $p$ and $b$. Any wedge is a leaf cell, so in order to create a new wedge along $e$, one must cut with a vertical line through $\overline{pa}$ or $\overline{bq}$. (See \figurename~\ref{fig:wedges}.)
\begin{figure}
	\centering \includegraphics[width=\textwidth]{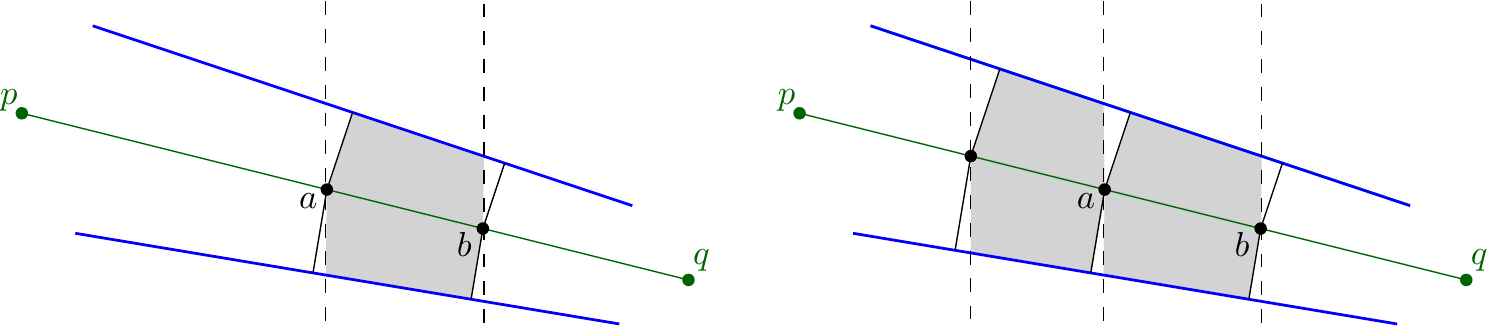}
	\caption{A first wedge is created (left), and an adjacent wedges is created afterwards (right). The cell 	containing $p$ has been split simultaneously.\label{fig:wedges}}
\end{figure}
It creates a new wedge adjacent to the first one, and it splits the cell containing $p$ or $q$, creating a new cell containing $p$ or $q$. Repeating this process, we can see that for each new wedge created along $e$, a new cell containing $p$ or $q$ is created. So there can be only $O(\log r)$ wedges along $e$.

If $e$ is a valley or a flat edge, then by Lemma~\ref{lem:valley}, it only conflicts with cells that contain its higher endpoint, so throughout the algorithm, there are $O(\log r)$ such cells by Lemma~\ref{lem:pointincell}. \end{proof}

We can now state the main result of this section. Its proof follows from Lemma~\ref{lem:structuresize}, \ref{lem:onesubdivision}, Lemma~\ref{lem:vertexconflict}, and \ref{lem:edgeconflict}.
\begin{lemma}\label{lem:vertical}
The vertical subdivision procedure completes in $O(n (\log n) \log r)$ time. The cells of the resulting subdivision are either empty cells, wedges, or do not contain any motorcycle vertex in their interior. They are simply connected, and the only reflex vertices on their boundaries are along valleys.
\end{lemma}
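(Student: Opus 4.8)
I would prove the time bound and the three structural properties separately, reusing the lemmas already established for a single subdivision step.

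For the running time, the plan is to charge all work to the nodes of the recursion tree of Algorithm~\ref{alg:vertical}. At a node holding a non-empty cell $\cell_i$, one call to {\scshape Divide-Vertical} performs a single subdivision step, costing $O((c_i+v_i)\log c_i)$ time by Lemma~\ref{lem:onesubdivision}; constructing the child data structures and computing $\ske'_j$ by brute force for the empty or wedge children adds only $O(c_i+1)$, which is absorbed. Because $c_i$ never exceeds the total number of edges of $\ske'$, which is $O(n)$, we have $\log c_i = O(\log n)$ uniformly, so the total is at most $O(\log n)\sum_i (c_i+v_i)$. I then substitute $\sum_i c_i = O(n\log r)$ from Lemma~\ref{lem:edgeconflict} and $\sum_i v_i = O(r\log r)=O(n\log r)$ from Lemma~\ref{lem:vertexconflict} (using $r\le n$), and the product with the $O(\log n)$ factor gives the claimed $O(n(\log n)\log r)$ bound.

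For the structural claims, the three-way classification of the final cells is immediate from the stopping rules of Algorithm~\ref{alg:vertical}: a cell is no longer subdivided precisely when it is empty, when it is a wedge, or when $V_i=\emptyset$; in the last case no vertex of $\motor$ lies in its interior. The assertion that the only reflex vertices on the boundary of a cell lie on valleys is exactly the content of Lemma~\ref{lem:convexity}. Simple connectivity is the one property that needs genuine work: by Lemma~\ref{thm:holes} every hole of a cell is a hole of $\poly$, and I would combine this with the elementary fact that a simple polygon has at least three convex vertices --- for a hole these are reflex vertices of $\poly$, hence vertices of $\motor$ and reflex vertices of $\partial\cell_i$ whenever the hole is a hole of $\cell_i$. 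Thus any cell with a hole conflicts with at least three motorcycle vertices, so $V_i\neq\emptyset$; this rules out holes in every $V_i=\emptyset$ cell. For an empty cell, the valley emanating from such a reflex vertex would enter the cell's interior, contradicting emptiness; for a wedge, the only edge of $\ske'$ meeting the cell interior is the ridge segment it is built around (Lemma~\ref{lem:wedge}), so the polygon edges bounding a hole cannot lie inside it. Hence none of the three cell types contains a hole, and all are simply connected.

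I expect the simple-connectivity step to be the main obstacle. It is the only place that requires a fact not already packaged in the preceding lemmas --- namely that every hole forces conflicting motorcycle vertices into the cell --- and it also requires separately checking the two categories (empty cells and wedges) that the algorithm turns into leaves regardless of the value of $V_i$. The running-time bound, by comparison, is a routine summation once Lemmas~\ref{lem:onesubdivision}, \ref{lem:vertexconflict} and~\ref{lem:edgeconflict} are available.
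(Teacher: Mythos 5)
Your proposal is correct and follows essentially the same route as the paper's proof: the same summation $O\bigl(\sum_i (c_i+v_i)\log c_i\bigr)$ bounded via Lemmas~\ref{lem:onesubdivision}, \ref{lem:vertexconflict} and~\ref{lem:edgeconflict}, the same classification of leaf cells from the stopping rules, and the same contradiction argument for simple connectivity (hole $\Rightarrow$ hole of $\poly$ by Lemma~\ref{thm:holes} $\Rightarrow$ conflicting reflex vertex $\Rightarrow$ the leaf must be an empty cell or wedge, which cannot enclose a hole). Your only departures are cosmetic and if anything slightly more careful: you cite Lemma~\ref{lem:convexity} (rather than the paper's Lemma~\ref{lem:valley}) for the reflex-vertices-on-valleys claim, which is the more apt reference, and you spell out why empty cells and wedges cannot contain holes, which the paper merely asserts.
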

\begin{proof} 
When we perform a subdivision, we can identify in constant time each empty child cell, because by Lemma~\ref{lem:structuresize}, these cells have constant size. When we find such a cell, we do not recurse on it, so these cells do not affect the running time of our algorithm. Therefore, by Lemma~\ref{lem:onesubdivision}, the running  time of Algorithm~\ref{alg:vertical} is the $O(\sum_i (c_i+v_i) \log c_i)$ over all cells created during the course of the algorithm. By Lemma~\ref{lem:vertexconflict} and ~\ref{lem:edgeconflict}, this quantity is $O(n (\log n)\log r)$. The only cells that are not subdivided are empty cells or wedges, hence the other cells cannot contain any motorcycle vertex in their interior. Lemma~\ref{lem:valley} implies the only reflex vertices on the boundary of a cell  are along valleys. 

We prove by contradiction that the cells are simply connected. Suppose that at the end of the vertical subdivision, a cell $\cell_i$ has a hole. This hole must be a hole of $\poly$ according to Lemma~\ref{thm:holes}, hence it has a reflex vertex which conflicts with $\cell_i$. As the conflict list of $\cell_i$ is non-empty, it must be an empty cell or a wedge, in which case it cannot contain a hole of $\poly$.
\end{proof}

\begin{figure}
	\centering
	\begin{subfigure}[b]{0.45\textwidth}
		\centering\includegraphics[width=.7\textwidth]{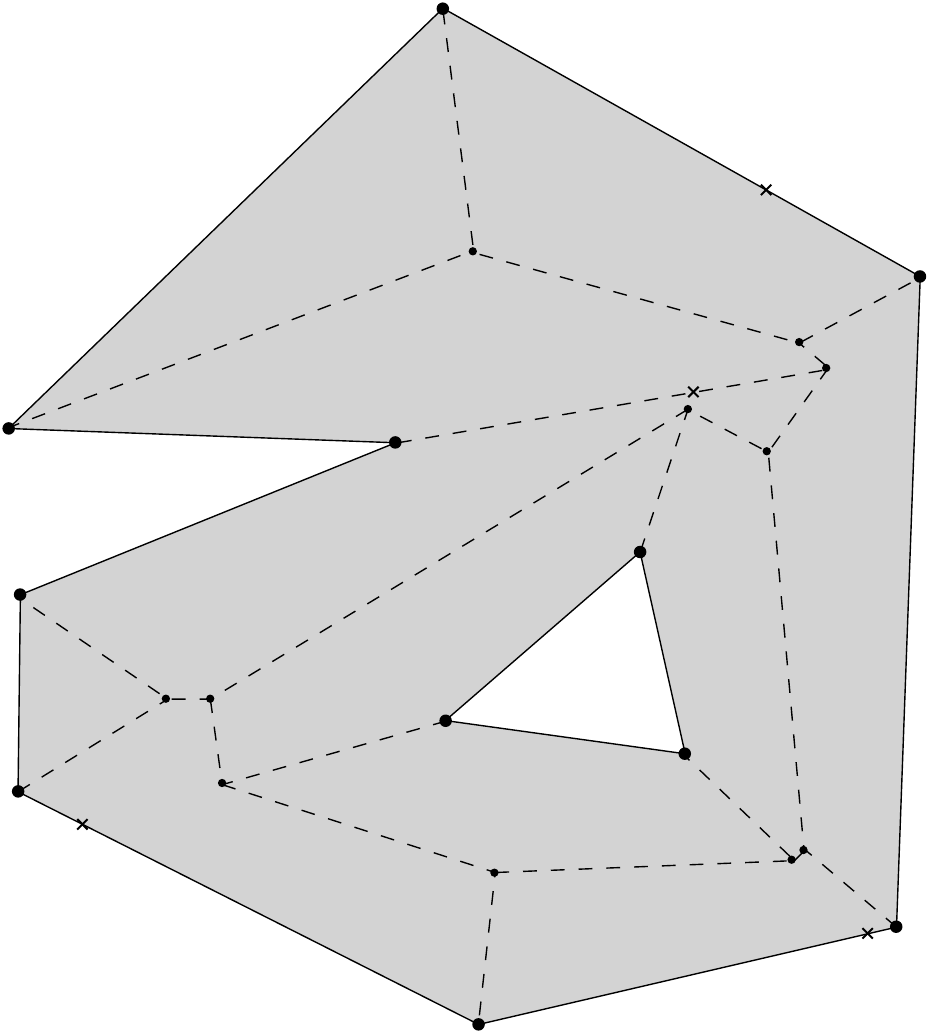}
		\caption{Input polygon and straight skeleton.}
	\end{subfigure}	
	\hspace{5ex}
	\begin{subfigure}[b]{0.45\textwidth}
		\centering\includegraphics[width=.7\textwidth]{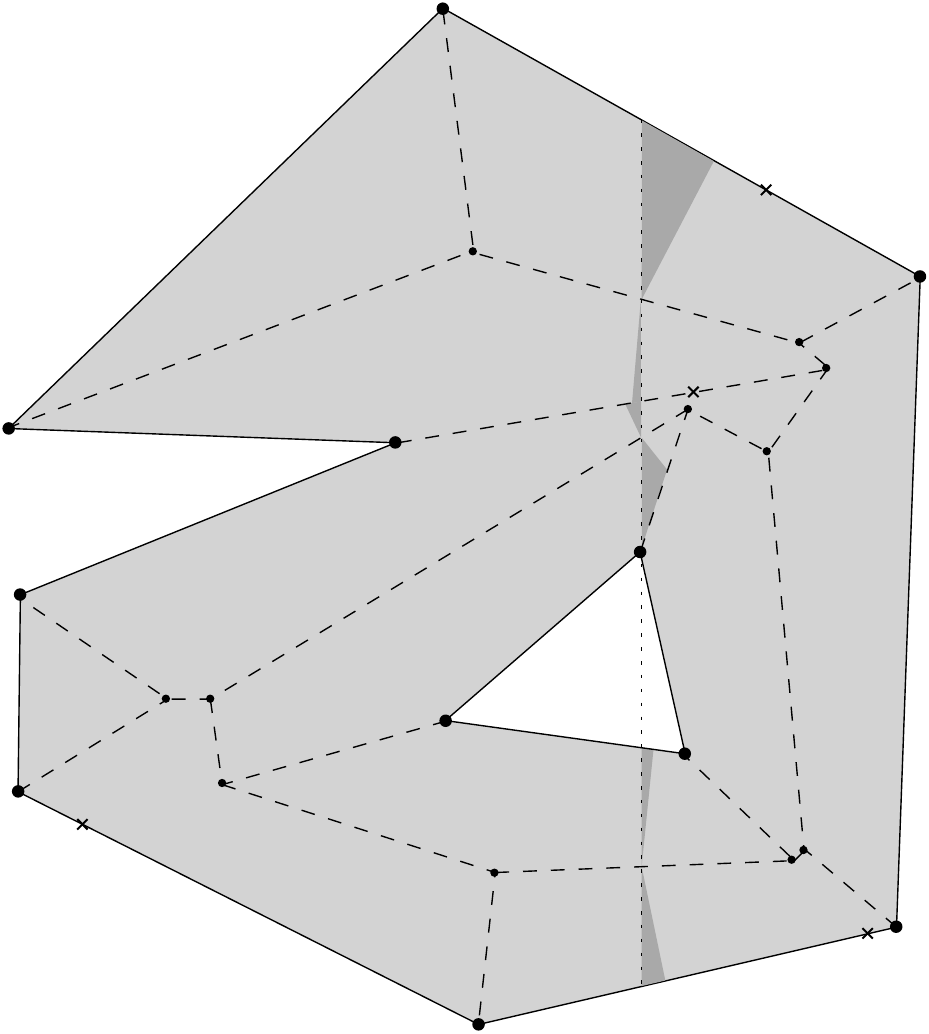}
		\caption{First vertical cut.}\label{eg:7}
	\end{subfigure}
	\centering
	\begin{subfigure}[b]{0.45\textwidth}
		\centering\includegraphics[width=.7\textwidth]{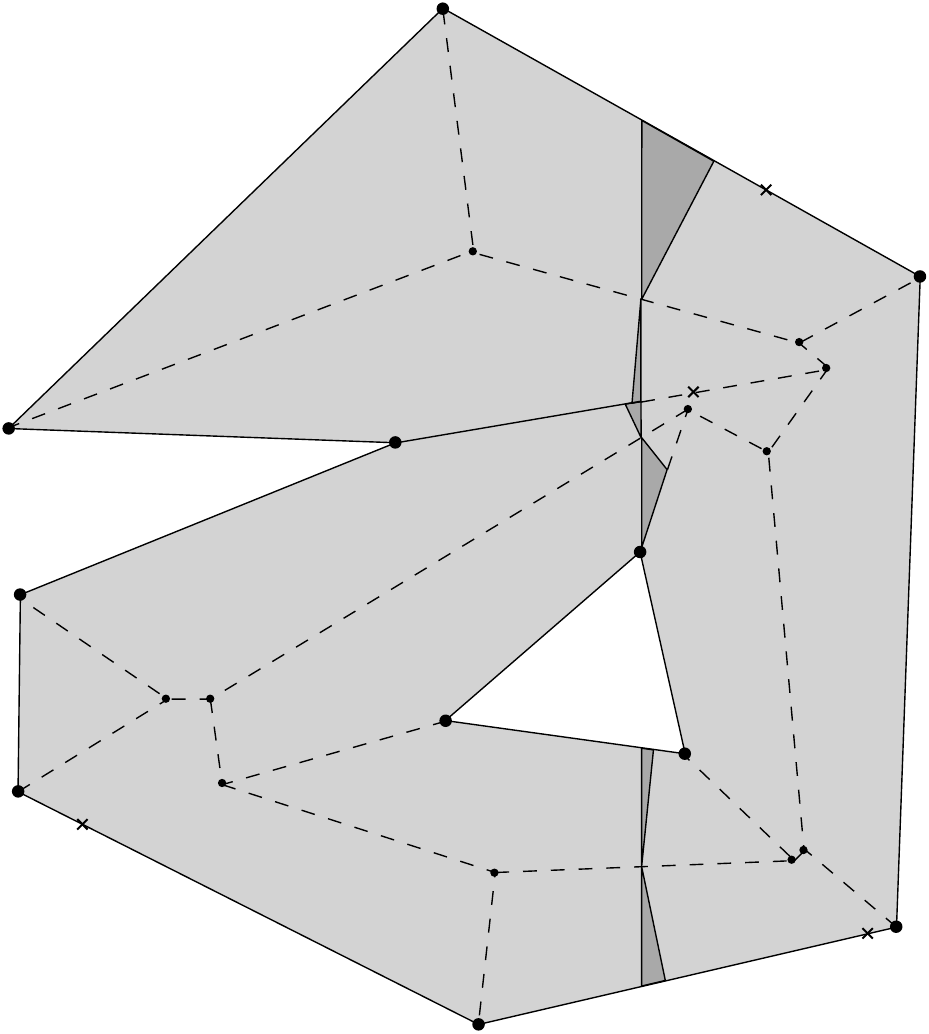}
		\caption{Subdivision induced by the first vertical cut.\label{fig:cut1}}
	\end{subfigure}
	\hspace{5ex}	
	\begin{subfigure}[b]{0.45\textwidth}
		\centering\includegraphics[width=.7\textwidth]{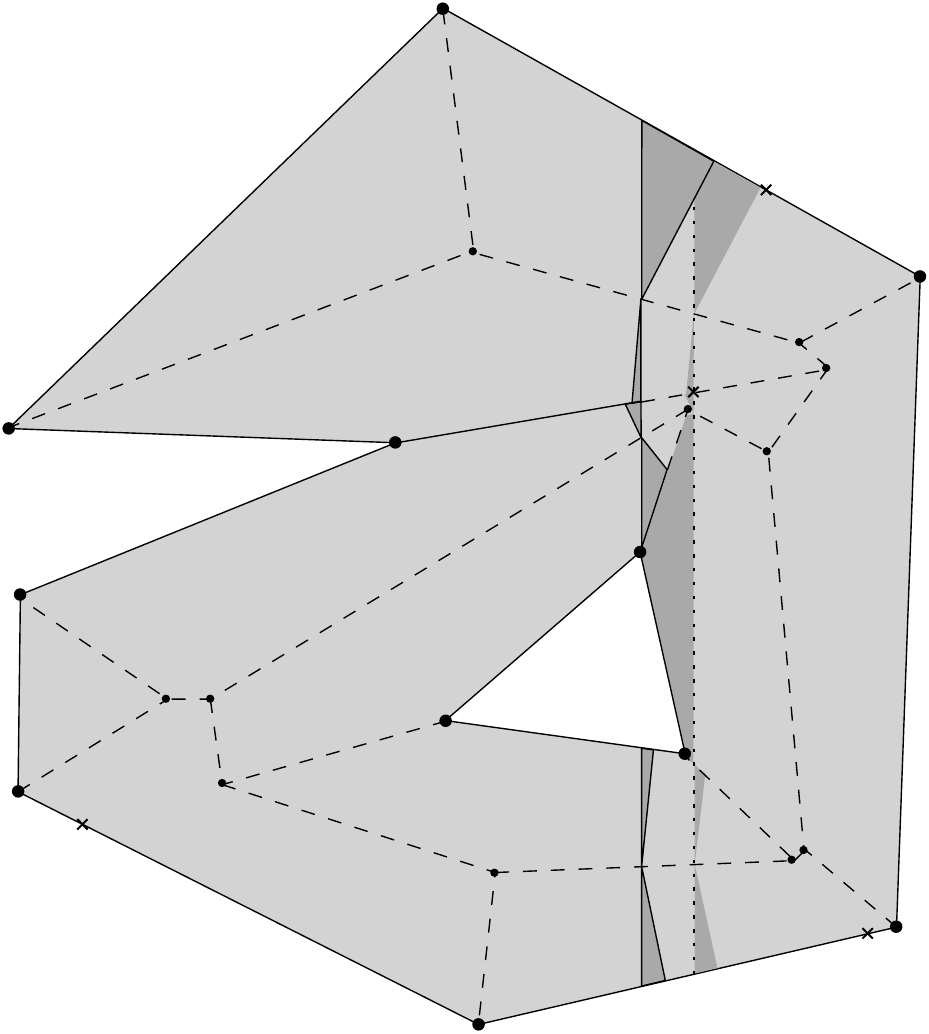}
		\caption{Second vertical cut.}\label{eg:7}
	\end{subfigure}
	\begin{subfigure}[b]{0.45\textwidth}
		\centering\includegraphics[width=.7\textwidth]{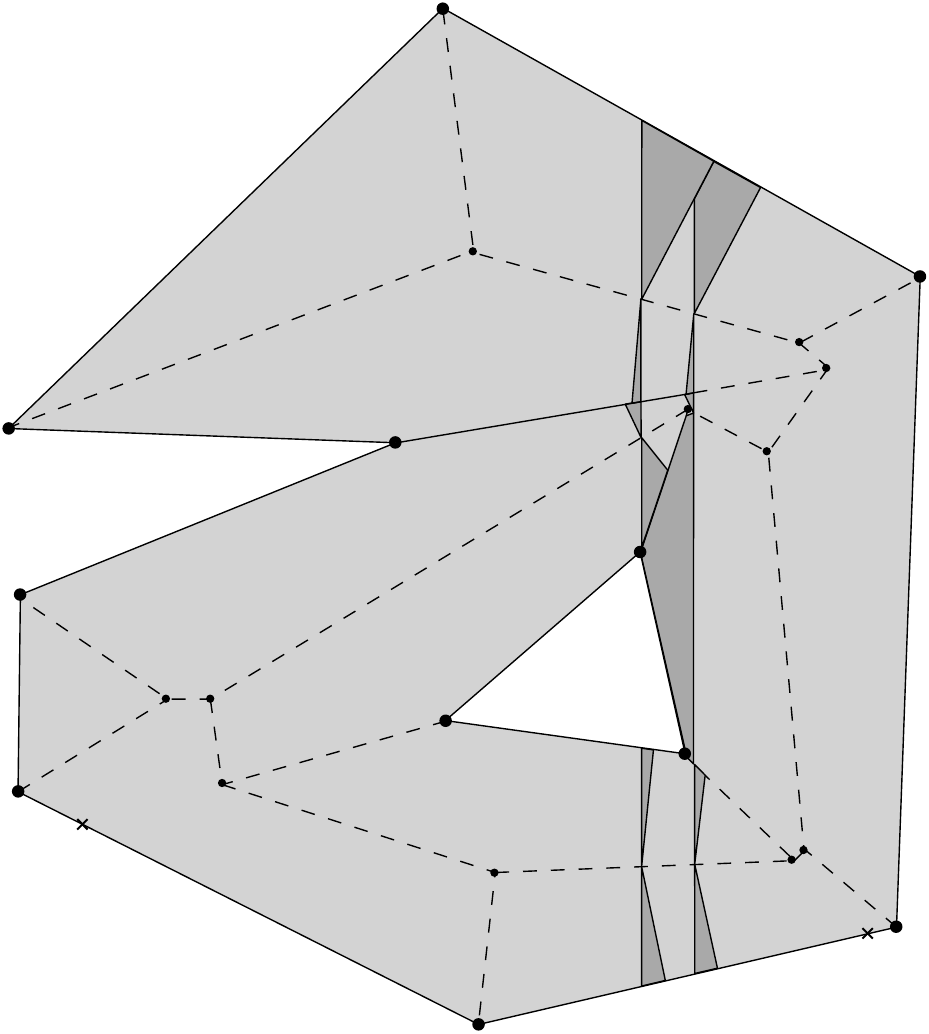}
		\caption{Subdivision induced by the second vertical cut.}\label{eg:7}
	\end{subfigure}
	\hspace{5ex}	
	\begin{subfigure}[b]{0.45\textwidth}
		\centering\includegraphics[width=.7\textwidth]{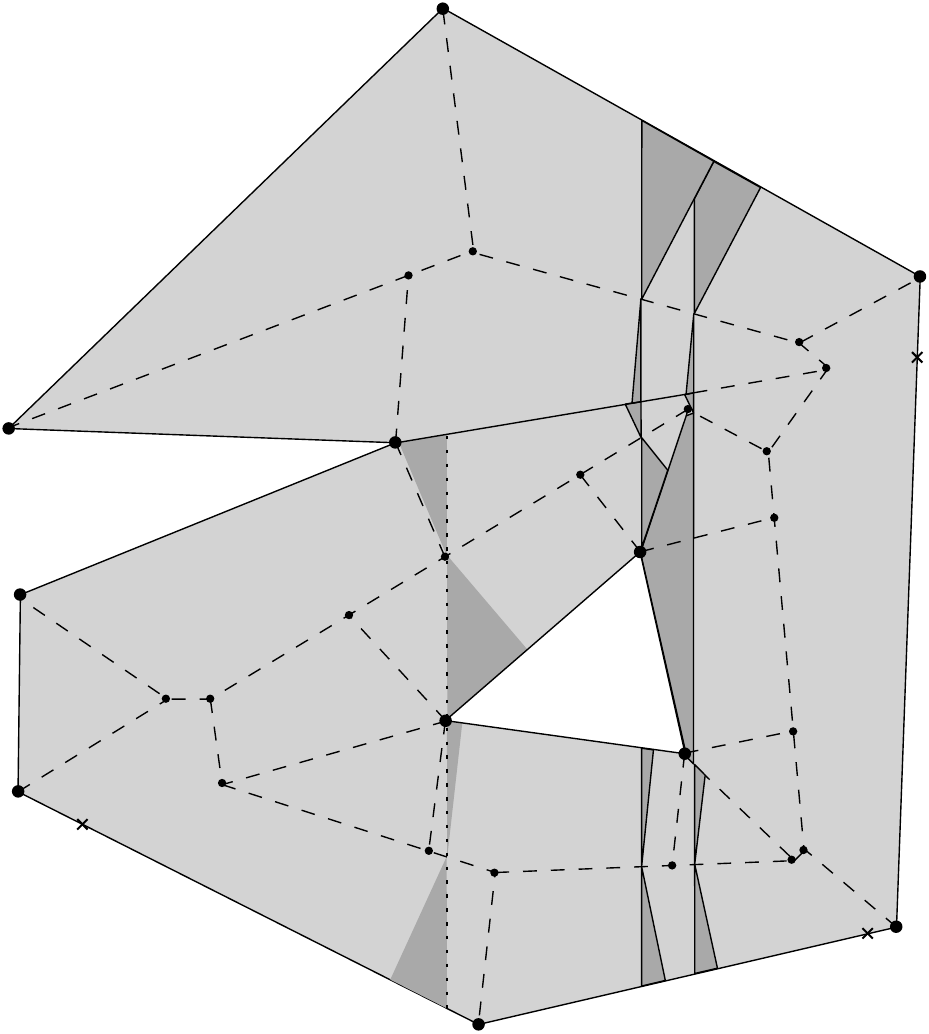}
		\caption{Third vertical cut.}\label{eg:7}
	\end{subfigure}
	\caption{The vertical subdivision. (Continued in \figurename~\ref{fig:example2}.) \label{fig:example}}
\end{figure}

\section{Cutting between valleys}
\label{sec:valley}

\subsection{Algorithm}

In this section, we describe the second stage of the algorithm. The corresponding procedure is called {\scshape Divide-Valley}, and its pseudocode is supplied in Algorithm~\ref{alg:valley}. Let $\cell_i$ be a cell of $\subdiv$ constructed by {\scshape Divide-Vertical} on which we call {\scshape Divide-Valley}. This cell $\cell_i$ is not empty and is not a wedge, as they are handled by brute force by {\scshape Divide-Vertical}, so by Lemma~\ref{lem:vertical}, it does not contain any reflex vertex in its interior. Let $R_i$ denote the set of valleys that conflict with $\cell_i$, and let $r_i$ denote its cardinality. The {\em extended valley} $e'$ corresponding to a valley $e \in R_i$ is the segment obtained by extending $e$ until it meets the boundary $\partial \cell_i$ of the cell. By Lemma~\ref{lem:valley}, the valley $e$ must meet $\partial \cell_i$, so we only need to extend it in one direction so as to obtain $e'$. As $\cell_i$ does not contain any motorcycle vertex in its interior, it implies that the extended valleys of $\cell_i$ do not cross. By Lemma~\ref{lem:valley}, the cell $\cell_i$ is simply connected, so the extended valleys form an outerplanar graph with outer face $\partial \cell_i$. (See \figurename~\ref{fig:outerplanar}.)
\begin{figure}
	\centering \includegraphics[width=\textwidth]{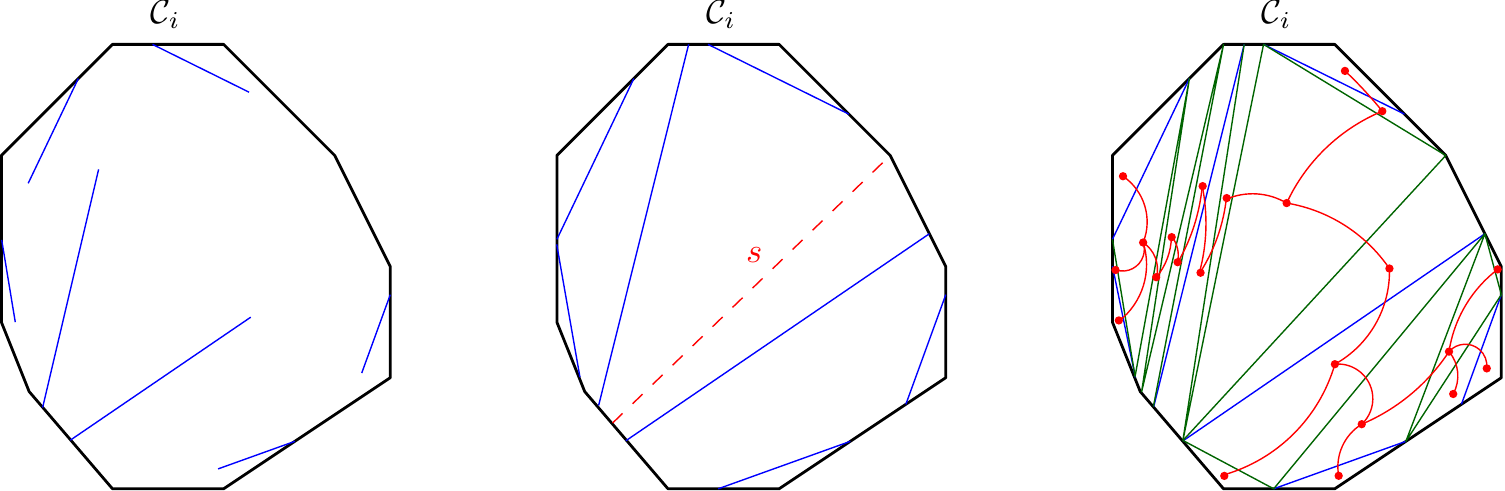}
	\caption{(Left) The cell $\cell_i$ and the conflicting valleys. 
		(Middle) The extended valleys, and a balanced cut.
		(Right) The triangulation and its dual graph.
		\label{fig:outerplanar}}
\end{figure}

At this stage of the algorithm, the cells are simply connected, so we record each cell $\cell_i$ using a single face list.  We do not need vertex conflict lists, as the cells do not conflict with any vertex. We do not need to store the valley conflict list $R_i$ either, as we can obtain it in linear time from the face list. 

If $\cell_i$ conflicts with at least one valley, we first construct a {\em balanced cut}, which is a chord $s$ of $\partial \cell_i$ such that there are at most $2r_i/3$ extended valleys on each side of $s$. (See \figurename~\ref{fig:outerplanar}, middle.) The existence and the algorithm for computing $s$ are explained below, in Lemma~\ref{lem:balanced}, but we first describe the rest of the algorithm. This balanced cut plays exactly the same role as the vertical edges $s_1,\dots,s_q$ along the cutting line that were used in {\scshape Divide-Vertical}. So we insert $s$ as a new boundary segment, we compute its lifted version $\hat s$, and at each crossing between $s$ and $\ske'$, intersects the descent paths as new boundary edges.

We repeat this process recursively, and we stop recursing whenever a cell does not conflict with any valley. All the structural results in Section~\ref{sec:vertical} still hold, except that now a cell is sandwiched between two balanced cuts, which can have arbitrary orientation, instead of the lines $\ell_i^-$ and $\ell_i^+$.

So now we assume that we reach a leaf $\cell_i$, which does not conflict with any valley. By Lemma~\ref{lem:convexity}, this cell $\cell_i$ must be convex. As valleys are the only reflex edges of $\terrain$, its restriction $\hat \cell_i$ above $\cell_i$ is convex. Hence, it is the lower envelope of the supporting planes of its faces. These faces are obtained in $O(c_i)$ time from the face lists, and the lower envelope can be computed in $O(c_i \log c_i)$ time algorithm using any optimal 3D convex hull algorithm.~\footnote{Although it would not improve the overall time bound of our algorithm, we can even compute $\hat \cell_i$ in $O(c_i)$ time using a linear-time algorithm for the medial axis of a convex polygon~\cite{aggarwal}: First construct the polygon on the $xy$-plane that is bounded by the traces of the supporting planes of the faces of $\hat \cell_i$, then compute its medial axis, and construct its intersection with $\cell_i$.}  We project $\hat \cell_i$ onto the $xy$-plane and we obtain the restriction $\ske'_i$ of $\ske'$ to $\cell_i$.

\begin{algorithm}
\caption{Cutting between valleys} \label{alg:valley}
\begin{algorithmic}[1]
\Procedure{Divide-Valley}{$\cell_i$}
   	\If{no valley conflicts with $\cell_i$}
  		\State Compute $\ske' \cap \cell_i$ as a lower envelope of planes.
		\State \Return
   	\EndIf
	\State Build the list of all valleys conflicting with $\cell_i$.
   	\State Construct a balanced cut $s$ as in Lemma~\ref{lem:balanced}.
	\State Construct the vertical slab $H$ through $s$.
   	\State Construct $\hat s$ as the lower envelope of the slabs intersecting $H$.
	\State Trace within $\cell_i$ the two or three steepest descent paths from 
			each vertex of $\hat s$.
	\State Update the partition $\subdiv$ using $s$ and the descent paths as new boundaries.
	\For{each child cell $\cell_j$ of $\cell_i$}
		\State Construct the data-structure for $\cell_j$.
		\State Call {\scshape Divide-Valley}$(\cell_j)$.
     	\EndFor
\EndProcedure
\end{algorithmic} 
\end{algorithm}

\subsection{Analysis}
It remains to analyses this algorithm, and prove the existence of a balanced cut. 
\begin{lemma}\label{lem:balanced}
Given a simply connected cell $\cell_i$ that does not conflict with any motorcycle vertex, and that conflicts with at least one valley, and given the face list of $\cell_i$, we can compute a balanced cut of $\cell_i$ in time $O(c_i \log c_i)$.
\end{lemma}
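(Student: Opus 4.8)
The plan is to reduce the construction of a balanced cut to finding a balanced edge separator in a bounded-degree tree, namely the dual of a triangulation of $\cell_i$ that respects the extended valleys. First I would turn the face list into the planar subdivision $S$ of $\cell_i$ whose edges are $\partial \cell_i$ together with the $r_i$ extended valleys. By Lemma~\ref{lem:valley} each valley already meets $\partial \cell_i$ at its reflex endpoint, so only one extension is needed; moreover, since $\cell_i$ contains no motorcycle vertex in its interior, the extended valleys are pairwise non-crossing. Hence extending a valley amounts to a single ray-shooting query against $\partial \cell_i$: the ray cannot hit another valley, as the two would then cross. Using a ray-shooting structure on the simple polygon $\cell_i$ (which has $O(c_i)$ edges by Lemma~\ref{lem:structuresize}), all $r_i=O(c_i)$ extensions, and hence $S$, can be computed in $O(c_i \log c_i)$ time. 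Because the extended valleys do not cross, every vertex of $S$ lies on $\partial \cell_i$, so triangulating $S$ without Steiner points produces a triangulation all of whose diagonals are chords of $\partial \cell_i$. Polygon triangulation costs $O(c_i \log c_i)$.

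Let $T$ be the dual tree of this triangulation: its nodes are triangles, its edges are interior diagonals, and its maximum degree is $3$. Each extended valley is a diagonal, hence corresponds to a distinguished edge of $T$; I mark these $r_i$ edges. Cutting $T$ at an edge $s$ splits $\cell_i$ into two sub-polygons, and since no extended valley crosses a diagonal, every extended valley other than $s$ itself lies entirely on one side. Thus a balanced cut is exactly an edge $s$ of $T$ whose removal leaves at most $2r_i/3$ marked edges strictly on each side. I would obtain one from the following claim: in a tree of maximum degree $3$ with $m\geq 3$ marked edges, some edge, when removed, leaves at most $2m/3$ marked edges on each side.

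To prove the claim I subdivide each marked edge by a weight-$1$ vertex and give all original nodes weight $0$, obtaining a tree $T'$ of maximum degree $3$ and total weight $m$ in which every weight-$1$ vertex has degree $2$ and every degree-$3$ vertex has weight $0$. Suppose, for contradiction, that every edge of $T'$ has its lighter side of weight $<m/3$, equivalently its heavier side $>2m/3$. Orient each edge toward its heavier side. A vertex cannot have two out-edges, since the two far components would be disjoint and each exceed $2m/3$, summing to more than $m$; hence there is a unique sink $c$, all of whose incident edges point toward it, so each component of $T'-c$ has weight $<m/3$. If $\deg(c)=3$ then $w(c)=0$, and the three components sum to $m$ while each is $<m/3$, a contradiction. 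If $\deg(c)\leq 2$ then the components sum to $<2m/3$, forcing $w(c)>m/3$, which is impossible since $w(c)\leq 1$ and $m\geq 3$. The balancing edge of $T'$ is either an original diagonal, used directly as $s$, or a half of a subdivided valley, in which case $s$ is that valley and one side is over-counted by at most one, harmlessly. The small cases $r_i\in\{1,2\}$ are handled by cutting along any single extended valley. Computing subtree marked-edge counts and locating the balancing edge is a linear-time traversal of $T$, so the whole procedure is dominated by the $O(c_i \log c_i)$ triangulation and ray-shooting, giving the stated bound.

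I expect the combinatorial claim to be the main obstacle, precisely because a \emph{single} chord must suffice: the star $K_{1,3}$ with unit vertex weights shows that a $2/3$-balanced \emph{edge} separator need not exist for arbitrary weighted degree-$3$ trees. The resolution is structural, namely that in $T'$ all the weight sits on degree-$2$ subdivision vertices while every branching (degree-$3$) node is weightless; this is exactly what rules out the bad star configuration and makes the sink argument go through. I would therefore isolate and prove this tree claim carefully, as it is where the non-degeneracy hypothesis (no motorcycle vertex inside $\cell_i$, yielding non-crossing valleys and the degree-$3$ triangulation dual) is actually used.
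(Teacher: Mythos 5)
Your proof is correct, and its pipeline---extend each conflicting valley to a chord of $\partial\cell_i$, triangulate the resulting outerplanar subdivision, pass to the degree-$3$ dual tree in which each valley edge is subdivided by a weight-$1$ node---is exactly the paper's. Where you genuinely diverge is the final combinatorial step. The paper does not prove an edge-separator lemma at all: it computes a weighted \emph{vertex} centroid $\omega$ of the subdivided dual tree, citing \cite{KarivHakimi}, i.e.\ a node whose removal leaves components of weight at most $r_i/2$; if $\omega$ is a valley node, that valley is the cut (and is in fact $1/2$-balanced), and otherwise $\omega$ is a triangle node and the paper cuts along the side of that triangle facing the heaviest subtree, which is where the $2/3$ factor enters. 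You instead prove from scratch, via the orient-toward-the-heavier-side/unique-sink argument, that a $2/3$-balanced \emph{edge} always exists in a degree-$3$ tree whose weight lies entirely on degree-$2$ nodes, and you locate it by subtree counting. Both arguments are valid and fit the $O(c_i\log c_i)$ budget: yours is self-contained where the paper leans on a citation plus a centroid-to-edge conversion, and your $K_{1,3}$ remark identifies precisely the obstruction that forces the paper's conversion step (a centroid sitting at a weightless branching node); conversely, the paper's centroid route needs no special-casing of $r_i\in\{1,2\}$, which your lemma's $m\geq 3$ hypothesis requires. Two minor points: the paper computes the extensions in $O(c_i)$ time by a stack-based walk along $\partial\cell_i$ rather than by ray shooting (both are within budget), and your phrase ``each valley already meets $\partial \cell_i$ at its reflex endpoint'' is slightly off---by Lemma~\ref{lem:valley} the valley enters $\cell_i$ through $\partial\cell_i$ at a point that need not be the reflex vertex of $\poly$---but your argument only uses the correct weaker fact that one endpoint of the clipped valley lies on $\partial\cell_i$, so only one extension is needed.
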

\begin{proof}
By Lemma~\ref{lem:structuresize}, the cell $\cell_i$ has $O(c_i)$ edges. We obtain the list $R_i$ of valleys conflicting with $\cell_i$ in $O(c_i)$ time by traversing the face list. Let $e_1,\dots,e_q$ denote these valleys. We first compute the set of extended valleys $R'_i=\{e'_1,\dots,e'_q\}$. The set $R'_i$ can be obtained in $O(c_i)$ time by traversing $\partial \cell_i$. We start at an arbitrary vertex of $\cell_i$, and each time we encounter the lower endpoint of a valley, we push the valley into a stack. At each edge $u$ of $\cell_i$ that we traverse, we check whether the extended valley $e'_j$ at the top of the stack meets it, and if so, we draw $e'_j$, we pop it out of the stack, and we check whether the new edge at the top of the stack meets $u$. 

Now we consider the outerplanar graph obtained by inserting the chords of $R'_i$ along $\partial \cell_i$. (See \figurename~\ref{fig:outerplanar}, middle.) We triangulate this graph, which can be done in $O(c_i)$ time using Chazelle's linear-time triangulation algorithm~\cite{Chazelle}, or in $O(c_i \log c_i)$ time using simpler algorithms~\cite{4M}. We construct the dual of this triangulation. We subdivide any edge of the dual corresponding to an extended valley, and we assign weight one to the new node. The other nodes have weight zero.  This graph is a tree, with degree at most 3, so we can compute a weighted centroid $\omega$ in time $O(c_i)$~\cite{KarivHakimi}. This centroid is a node of the tree such that each connected component of the forest obtained by removing the centroid has weight at most $r_i / 2$.

If $\omega$ corresponds to an extended valley $e'_j$, we pick $s=e'_j$ as the balanced cut. It splits $R_i$ into two subsets of size at most $r_i / 2$. Otherwise, $\omega$ corresponds to a face of the triangulation, such that the three subgraph rooted at $c$ have weight at most $r_i / 2$. We cut along the edge $s$ of this triangular face corresponding to the subtree with largest weight. 
\end{proof}

Lemma~\ref{lem:balanced} plays the same role as Lemma~\ref{lem:onesubdivision} in the analysis of {\scshape Divide-Vertical}. At each level of recursion, the size of the largest conflict list $R_i$ is multiplied by at most $2 / 3$, so the recursion depth is still $O(\log r)$. A leaf cell $\cell_i$ is handled in $O(c_i \log c_i)$ time by computing a lower envelope of planes, as explained above. It follows that we can complete the second step of the subdivision, and compute $\ske'$ within each cell, in overall $O(n (\log n) \log r)$ time. Then Theorem~\ref{th:main} follows.

Our analysis of this algorithm is tight, as shown by the example in Section~\ref{sec:tightness}.

\begin{figure}
	\centering
	\begin{subfigure}{\textwidth}
		\centering \includegraphics[width=.5\textwidth]{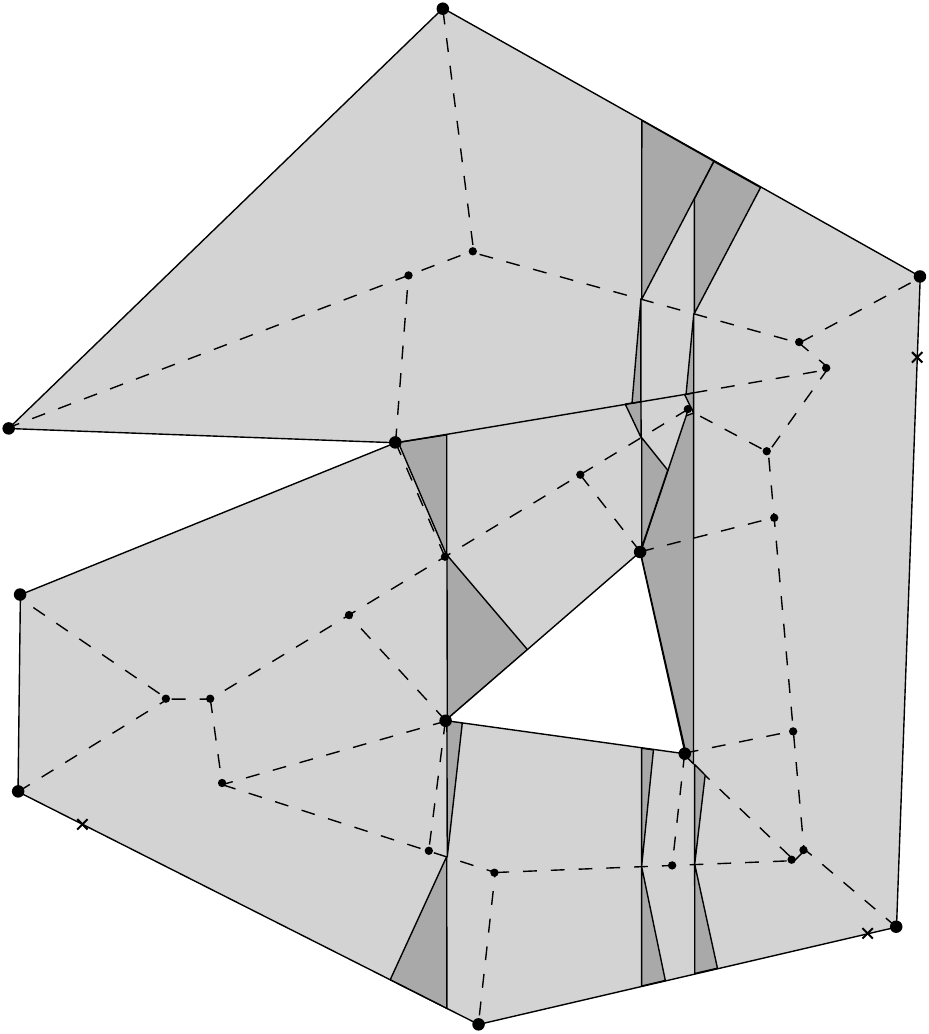}
		\caption{Vertical subdivision computed by {\scshape Divide-Vertical} on the same 
			polygon as in \figurename~\ref{fig:example}. The crosses along $\partial \poly$ are
			the terminal vertices of the motorcycle tracks. Note that the rightmost cell lies below two valleys, hence a non-trivial application of {\scshape Divide-Valley} is required.\label{fig:example2_1}}
	\end{subfigure}
	\begin{subfigure}{\textwidth}
		\centering \includegraphics[width=.5\textwidth]{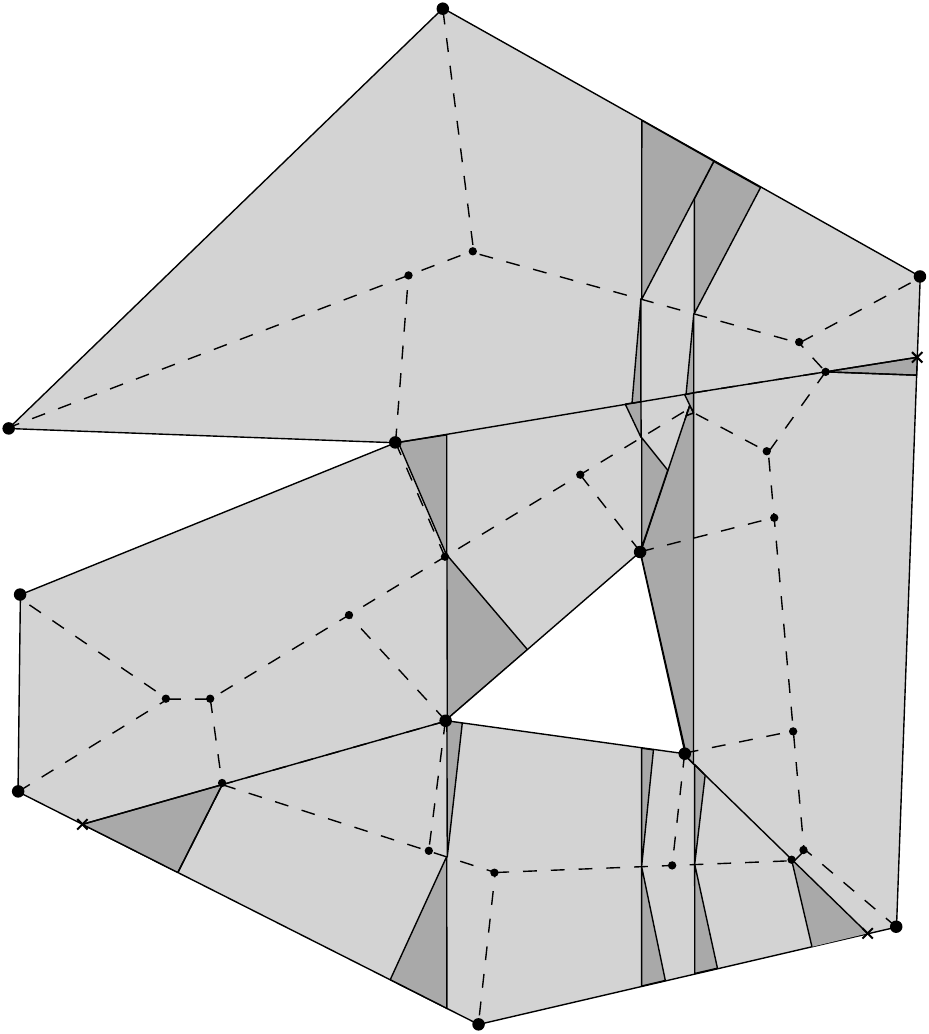}
		\caption{Final subdivision computed by {\scshape Divide-Valley}. 
		\label{fig:example2_2}}
	\end{subfigure}
	\caption{The result of the two stages of subdivision.\label{fig:example2}}
\end{figure}

\section{Degenerate cases}
\label{sec:degen}
As discussed in Section~\ref{sec:prelim}, the description and analysis of our algorithm was given for polygons in general position. Here we briefly explain why our result generalizes to arbitrary polygons.

As explained in the article by Eppstein and Erickson~\cite{eppstein}, almost all degeneracies can be treated by standard perturbation techniques, replacing high degree nodes with several nodes of degree 3. The only difficult case is when two or more valleys meet, and generate a new valley. In the induced motorcycle graph, this situation is represented by two or more motorcycles colliding, and generating a new motorcycle~\cite{huber2}.

So in degenerate cases, we assume that the exact induced motorcycle graph has been computed. It can be done in time $O(r^{17/11+\varepsilon})$ for any $\varepsilon>0$, using Eppstein and Erickson's algorithm~\cite{eppstein}. Then the problem becomes one of computing a lower envelope of slabs. Standard perturbation techniques apply to this problem~\cite{SOS}, so our non-degeneracy assumptions are valid. 

The only difference with the non-degenerate case is that now, instead of having each valley adjacent to a reflex vertex, the valleys form a forest, with leaves at the reflex vertex. So a descent path may be a polyline with arbitrarily many vertices. Thus, when we perform a vertical cut, we cannot necessarily trace a descent path in constant time. However, we can trace it in time proportional to its size, and its edges become cell boundaries. The subdivision can be updated in amortized $O(\log n)$ time for each such edge, as we update the partition by plane sweep. So the extra contribution to the overall running time is $O(n \log n)$.

\section{Tightness of analysis}
\label{sec:tightness}

We give an example to demonstrate that for this algorithm the analysis is tight. Consider a polygon $\mathcal{P}$ where, on the left hand side, we have a convex chain of $\Omega (n)$ near-vertical edges. Along the top boundary of $\mathcal{P}$ we have $\Omega (r)$ small reflex dips pointing downwards. See \figurename~\ref{fig:tight} for an example with a convex chain of size 4, and 5 reflex dips. The straight skeleton faces corresponding to each edge of the convex chain to the left of the polygon extend deep into the polygon. Each time we make a vertical cut to the right of all other vertical cuts previously made, it will cross through all faces of the chain, hence all the slabs must be provided to the lower envelope calculation. It then follows that Algorithm~\ref{alg:vertical} spends $\Omega (n (\log n) \log r)$ time as it computes $\Omega(\log r)$ lower envelopes of size $\Omega(n)$. 

\begin{figure}
\centering
\includegraphics[scale=.75]{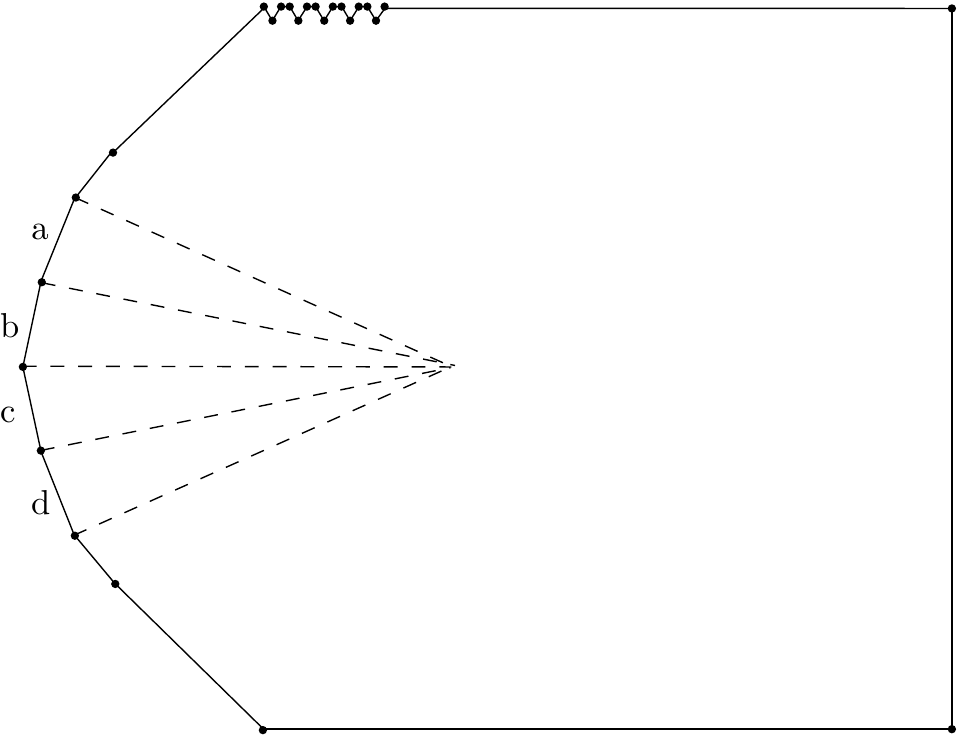}
\caption{Tight example. For vertical cuts that are introduced from left to right, the four slabs corresponding to $e_1, e_2, e_3, e_4$ conflict with the cuts. \label{fig:tight} }
\end{figure}

\bibliographystyle{plain}
\bibliography{references}

\end{document}